\newtheorem{theorem}{Theorem}[section]
\newtheorem{lemma}[theorem]{Lemma}
\newtheorem{remark}{Remark}
\theoremstyle{definition}
\newtheorem{definition}{Definition}
\renewcommand{\Pr}{\mathop{\bf Pr\/}}
\newcommand{\E}{\mathop{\bf E\/}}
\newcommand{\R}{\mathbb R}
\newcommand{\opt}{\mathsf{opt}}
\newcommand{\eps}{\epsilon}
\newcommand{\calF}{{\cal F}}
\newcommand{\calR}{{\cal R}}
\newcommand{\bb}{\boldsymbol{b}}
\newcommand{\bp}{\boldsymbol{p}}
\newcommand{\bq}{\boldsymbol{q}}
\newcommand{\bv}{\boldsymbol{v}}
\newcommand{\bS}{\boldsymbol{S}}
\newcommand{\abs}[1]{\left\lvert #1 \right\rvert}
\begin{document}

\title{\bf Simple and Nearly Optimal Multi-Item Auctions}
\author{Yang Cai\thanks{Supported by NSF Award CCF-0953960 (CAREER) and CCF-1101491. Part of this work was done while the author was visiting Microsoft Research, Redmond.}\\ EECS, MIT\\
\tt{ycai@csail.mit.edu} \\
\and 
Zhiyi Huang\thanks{ This work was supported in part by an ONR MURI Grant N000140710907. Part of this work was done while the author was visiting Microsoft Research, Redmond.}\\ 
University of Pennsylvania\\
\tt{ hzhiyi@cis.upenn.edu}}

\date{}

\begin{titlepage}
\thispagestyle{empty}

\maketitle


\begin{abstract} 
  \thispagestyle{empty}
  We provide a Polynomial Time Approximation Scheme (PTAS) for the Bayesian optimal multi-item multi-bidder auction problem under two conditions. First, bidders are independent, have additive valuations and are from the same population. Second, every bidder's value distributions of items are independent but not necessarily identical monotone hazard rate (MHR) distributions. For non-i.i.d.~bidders, we also provide a PTAS when the number of bidders is small. Prior to our work, even for a single bidder, only constant factor approximations are known. 
  
  Another appealing feature of our mechanism is the simple allocation rule. Indeed, the mechanism we use is either the second-price auction with reserve price on every item individually, or VCG allocation with a few outlying items that requires additional treatments. It is surprising that such simple allocation rules suffice to obtain nearly optimal revenue.
\end{abstract}
\end{titlepage}

\section{Introduction}

The multi-dimensional mechanism design problem has been widely studied in Economics, and recently in the theory of computation community. Consider a seller who has a limited supply of several distinguishable items and many interested bidders. The goal for the seller is to design an auction that will incentivize the bidders to truthfully report their private valuations and maximize her revenue. Unfortunately, optimal mechanism is not even well-defined in the worst-case analysis, as no truthful mechanism can be universally optimal for all possible valuation profiles. Economists have taken the Bayesian approach to cope with this impossibility, where the valuations of the bidders are assumed to be drawn from some publicly known distributions. Given such prior distributions, the optimal mechanism is defined as the one that maximizes the expected revenue among all (possibly randomized) truthful and individual rational mechanisms. In this paper, the notion of truthfulness we will focus on is Bayesian incentive compatibility (BIC), while we will also consider other notions of truthfulness such as incentive compatibility and deterministic truthfulness. Informally, a mechanism is BIC if each bidder maximizes her expected utility by truth-telling assuming other bidders are also truthful, where the expectation is over the randomness of the mechanisms and random realizations of other agents' valuations.

When there is only a single item for sale, the structure of the optimal
mechanism is very well-understood. Myerson \cite{myerson} provides an
elegant solution to the optimal single-item auction problem. However,
Myerson's result does not extend to the more general {\em multi-item}
setting. Following Myerson's work, a large body of research in Economics
has been devoted to extending his result to the multi-item setting (see
survey \cite{optimal:econ} and the references therein). 

The theory of computation community has also studied this problem during the past decade, with an eye on the computational efficiency of the mechanism. There has been lots of success in obtaining constant
factor approximations in various settings (e.g., \cite{CHK, CHMS, BGGM,
Alaei}). Lately, attention has been shifted to getting nearly optimal
revenue and such mechanisms have been proposed for several cases (e.g.,
\cite{DW, CDW12a, AlaeiFHHM12}). 

In a very recent paper \cite{CDW12b}, Cai et al.~consider a very general setting. In their setting, bidders are additive with arbitrary combinatorial feasibility constraints. They show how to design revenue-optimal auctions by reducing the revenue optimization to welfare optimization under the same constraints. Their algorithm has runtime polynomial in the total number of bidder types\footnote{More precisely, these algorithm is polynomial in $\sum_i |S_i|$, where $S_i$ is the support of the joint value distribution for bidder $i$.}. This is the natural description size for the problem if we allow items to have correlated values. However, when items are independent, the natural description is much more succinct. Making their algorithm inefficient (exponential in the input size). Moreover, to handle such a broad setting, their solution has to be relatively complicated, which might sometimes makes it hard to implement in reality. The above drawbacks motivate the research in this paper, that is, {\em designing simple, computational efficient, and nearly optimal auctions}.

\subsection{Main Results}

In this paper, we will focus on a very important and fundamental case: Bidders have independent and additive valuations, and items values are independent. Our goal is to obtain an algorithm whose runtime is polynomial in the succinct input-size and propose much simpler revenue-optimal auctions.

More concretely, let there be $m$ bidders\footnote{We will sometimes use $k$ to denote the number of bidders when this number is an absolute constant.} and $n$ heterogeneous items (unit-supply). Let there be no feasibility constraints on the allocations. We will assume the bidders' valuations are additive and the values are drawn from independent but not necessarily identical distributions subject to the standard monotone hazard rate (MHR) assumption. Roughly speaking, MHR distributions are those whose tails are ``thinner'' than exponential distributions. The formal definition of MHR is deferred to Section \ref{sec:prelim}. We want to efficiently find a mechanism whose expected revenue is optimal relative to any (possibly randomized) truthful and individual rational mechanism. Prior to our work, even the case of a single bidder is elusive in the presence of many independent but not necessarily identical items. 

Our main results are the following two theorems.

\begin{theorem}\label{thm:mainiid}
	Let there be $n$ heterogeneous items, $m$ additive bidders, and $\{\mathcal{F}_j\}_{j\in[n]}$ be a collection of independent but not necessarily identical MHR distributions. Suppose for each bidder $i$, her value for item $j$ is drawn independently from $\mathcal{F}_j$. Then, there is a Polynomial Time Approximation Scheme\footnote{Recall that a {\em Polynomial Time Approximation Scheme} (PTAS) is a family of algorithms $\{\mathcal{A}_{\epsilon}\}_{\epsilon}$, indexed by a parameter $\epsilon >0$, such that for every fixed $\epsilon>0$, $\mathcal{A}_{\epsilon}$ runs in poly-time. In particular, for any constant $\epsilon>0$, the PTAS constructs an auction whose expected revenue is a $(1+\epsilon)$ factor approximation to the optimal, in time polynomial in $n$ and $m$.} (PTAS) for computing the revenue-optimal truthful mechanism. 
\end{theorem}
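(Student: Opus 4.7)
The plan is to split items into a \emph{core} set $C$ of many individually-low-revenue items and a \emph{tail} set $T$ of few high-revenue items, apply a simple mechanism on each block, and bound the gap to the optimal BIC revenue by a $(1+\eps)$ factor. For each item $j$ let $p_j^\star$ be the monopoly reserve of $\calF_j$ and let $r_j$ be the Myerson revenue of a single-item auction with $m$ i.i.d.\ bidders drawn from $\calF_j$. Place $j$ in the tail $T$ if $r_j$ exceeds a threshold $\tau(\eps)$, and in the core $C$ otherwise. Using the MHR identities $\E_{v\sim \calF_j}[v] = \Theta(p_j^\star)$ and upper-tail decay, one bounds $|T| \le g(\eps)$ for a function $g$ independent of $n$ and $m$.

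On $C$, run the second-price auction with reserve $p_j^\star$ on each item $j$ independently. On $T$, since $|T|=O_\eps(1)$, discretize each $\calF_j$ ($j\in T$) onto a $\poly(1/\eps)$-size support (MHR upper-tail decay and the lower bound on $p_j^\star$ together control the discretization loss), and invoke the Cai--Daskalakis--Weinberg LP-based construction to compute a revenue-optimal BIC mechanism for the tail block. The runtime is polynomial in $n,m$ for each fixed $\eps$, because the tail subproblem involves only $O_\eps(1)$ items and therefore an $O_\eps(1)$-size per-bidder type-support, so $\sum_i |S_i| = m \cdot O_\eps(1)$. Running the two mechanisms on disjoint item-sets preserves BIC by additivity of valuations.

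The analysis reduces to the core--tail decomposition
\[
\mathrm{OPT} \ \le\ (1+\eps)\bigl(\mathrm{OPT}_C + \mathrm{OPT}_T\bigr),
\]
where $\mathrm{OPT}_C,\mathrm{OPT}_T$ denote the optimal BIC revenues restricted to each item subset. The strategy is to upper-bound $\mathrm{OPT}$ by the ex-ante relaxation $\sum_j R_j$, in which each item is sold as if in isolation subject only to ex-ante allocation probabilities (a standard tool for additive bidders with independent items), split the sum at the boundary, then show that per-item second-price-with-reserve recovers $(1-\eps)\sum_{j\in C}R_j$ on the core, while $\mathrm{OPT}_T$ already dominates $\sum_{j\in T}R_j$ on the tail.

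The hardest step is the tightness of the core bound: per-item Myerson must recover a $(1-\eps)$ fraction — not merely a constant — of the ex-ante value on $C$. This requires a sharp MHR-based concentration estimate on the top-order statistic combined with the smoothing effect of aggregating many small-revenue items, essentially a quantitative Bulow--Klemperer-style argument that exploits smallness of each core item's $r_j$. A secondary difficulty is choosing $\tau(\eps)$ so that both $|T|\le g(\eps)$ (needed for the tail algorithm to run in poly-time) and the boundary items collectively contribute at most $\eps \cdot \mathrm{OPT}$ to either side; these requirements pull $\tau$ in opposite directions and must be balanced using the MHR relation between monopoly price, monopoly revenue, and expected value.
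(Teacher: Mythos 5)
There is a genuine gap, and it is in the two places you flag as the heart of the argument. First, the benchmark inequality $\mathrm{OPT} \le \sum_j R_j$ (per-item ex-ante relaxation) is simply false for additive bidders with independent items: the optimal mechanism may bundle across items and beat any item-by-item upper bound. The paper's own opening example already refutes it — a single bidder with $n$ i.i.d.\ uniform $[0,1]$ items has $\sum_j R_j = n/4$ (each $R_j$ is the monopoly revenue $1/4$), while a grand-bundle reserve price extracts almost $n/2$. Second, your ``hardest step'' — that per-item second-price-with-reserve recovers a $(1-\eps)$ fraction of the per-item benchmark on the core because each core item's revenue $r_j$ is small — cannot be made to work: the single-item approximation ratio of a reserve-price auction is scale-invariant, so smallness of $r_j$ (in absolute terms or relative to the total) buys no improvement at all. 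What does drive the ratio to $1$ is a \emph{large number of i.i.d.\ bidders}, which your plan never invokes. In the uniform example above (with $m=1$, which Theorem \ref{thm:mainiid} allows) all $n$ items are identical, so they all land in your core, and your mechanism is pinned at a factor-$\approx 1/2$ loss; no choice of $\tau(\eps)$ can rescue this, because the failure is not at the core/tail boundary but in the core mechanism itself.

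The paper's route is structurally different and shows what is missing. It case-splits on the number of bidders rather than on per-item revenue: if $m \ge (12/\eps)^{12/\eps}$, Lemma \ref{lem:extremevalue2} (from the i.i.d.\ MHR extreme value theorem of \cite{CD}) gives, for every item $j$, a reserve $r^*_j$ with $\Pr[\max_i v_{ij}\ge r^*_j]\cdot r^*_j \ge (1-\eps)\E[\max_i v_{ij}]$, so per-item second price with reserve already extracts $(1-\eps)$ of the full expected welfare, which dominates $\mathrm{OPT}$ — here is where the same-population assumption is essential, and your proposal never uses it. If instead $m$ is below that constant, the paper partitions the \emph{items} (Lemma \ref{lem:partition}) not by per-item revenue but into a constant-size high-variance group (handled by brute force/LP as in Lemma \ref{lem:fewitem}), a group whose welfare is $(\eps,\delta)$-concentrated (handled by the VCG-style reserve-welfare mechanism of Lemma \ref{lem:concentrated}, i.e., a bundling mechanism that extracts nearly the whole concentrated welfare), and a negligible group, with the benchmark split $\opt([n]) \le \opt(S) + \sum_{j\notin S}\E[\max_i v_{ij}]$ of Lemma \ref{lem:2} — revenue plus \emph{welfare}, not revenue plus per-item revenue. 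The bundling/concentration component for the few-bidder regime is exactly what your decomposition lacks, and without it selling the core separately provably loses a constant factor.
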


In the above theorem, we consider the case when the bidders are from the same population. So any two bidders have the same value distributions for any particular item. This is a realistic assumption, as to tell which demographic group the bidder is from, the seller needs to collect lots of information, e.g., her occupation, income, marital status etc. which is usually infeasible in practice, especially when the number of bidders is huge. When there are only a handful of bidders, however, the seller might have enough knowledge to distinguish different bidders. We develop the following theorem to address this case.

\begin{theorem} \label{thm:main}
	Let there be $n$ heterogeneous items, $k$ additive bidders (consider $k$ as an absolute constant), and $\{\mathcal{F}_{ij}\}_{i\in[k],j\in[n]}$ be a collection of independent but not necessarily identical MHR Distributions. For any bidder $i$ and item $j$, her value for the item is drawn from $\mathcal{F}_{ij}$. There is a PTAS for computing the revenue-optimal BIC mechanism. 
\end{theorem}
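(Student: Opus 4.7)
The plan is to lift the core-tail decomposition used for Theorem~\ref{thm:mainiid} (the i.i.d.\ case) to the non-identical setting, exploiting that $k$ is an absolute constant. For each pair $(i,j)$ I would pick a threshold $t_{ij}$ tied to the tail of the MHR distribution $\mathcal{F}_{ij}$ (e.g.\ a high quantile such as the $1/(kn)$-quantile), and call item $j$ \emph{core} for bidder $i$ if $v_{ij} \le t_{ij}$ and \emph{tail} otherwise. Standard MHR facts (thin-exponential tails and anti-concentration near the mean) then give (i)~on core items each bidder's value lies in a narrow range, so the sum of core values concentrates sharply around its expectation, and (ii)~in expectation each bidder has only $O(\log n)$ tail items, so the total number of tail items is $O(k\log n)$.

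Next I would bound the optimal BIC revenue by the sum of a \emph{core contribution} and a \emph{tail contribution} up to a $(1+\epsilon)$ factor, and match each piece with a simple mechanism. For the core I would run VCG with item- and bidder-specific reserve prices obtained from the Myerson virtual-value analysis on each $\mathcal{F}_{ij}$; bounded-support concentration (Hoeffding/Bernstein) on the additive core valuations together with the MHR reserve analysis should imply this extracts a $(1-\epsilon)$ fraction of the core contribution. For the tail items, since their expected number is only $O(k\log n)$ and $k$ is constant, I can afford a heavier treatment: run Myerson's optimal single-item auction on each tail item individually. Because item values are independent across items and the bidders are independent across each other, chaining these single-item Myerson auctions preserves BIC; the revenue from this stage plus a union-bound style loss from bidders winning several tail items at once should cover the tail contribution to within $(1-\epsilon)$.

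The main obstacle I foresee is the decomposition step itself. In the i.i.d.\ case one exploits bidder symmetry to upper bound optimal revenue by core plus tail; without symmetry I expect to need a more careful argument along the lines of the Cai--Daskalakis--Weinberg LP-duality or a Babaioff--Immorlica--Lucier--Weinberg-style core-tail bound, adapted so that the tail term is controlled by a sum of per-item Myerson revenues and the core term by a sum of truncated expectations. A secondary obstacle is gluing the core and tail mechanisms into one BIC mechanism when the core/tail partition depends on reported values: I would handle this either via a randomized ``flip-a-coin'' implementation that runs core-VCG with probability $1-\delta$ and the tail Myerson suite with probability $\delta$ (paying only a $(1-\delta)$ factor in revenue), or via an $\epsilon$-BIC to BIC conversion, which is affordable because $k$ is constant. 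Finally, the PTAS running time comes from the fact that Myerson reserves and single-item Myerson auctions are efficiently computable from each $\mathcal{F}_{ij}$, and all enumerations over the $k$ bidders have cost depending only on $k$ and $\epsilon$.
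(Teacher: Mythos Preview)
Your proposal has two genuine gaps that together prevent it from yielding a PTAS.

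\textbf{Gap 1: the decomposition bound is constant-factor, not $(1+\epsilon)$.}  The core--tail decompositions you invoke (Babaioff--Immorlica--Lucier--Weinberg, Cai--Daskalakis--Weinberg duality) upper-bound $\opt$ by a core term plus a tail term, but the inequality itself already loses a \emph{constant} factor; there is no known way to sharpen it to $(1+\epsilon)$.  You flag this as ``the main obstacle I foresee'' but then assume it can be done --- it cannot, at least not along those lines.  The paper sidesteps this entirely: its decomposition (Lemma~\ref{lem:2}) is exact, $\opt([n]) \le \opt(S_1) + \sum_{j\notin S_1}\E[\max_i v_{ij}]$, because the second term is \emph{social welfare}, not a revenue proxy.

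\textbf{Gap 2: per-item reserves do not extract $(1-\epsilon)$ of the core welfare.}  You assert that ``VCG with item- and bidder-specific reserve prices \dots\ extracts a $(1-\epsilon)$ fraction of the core contribution.''  This is false even for a single bidder: with $n$ i.i.d.\ uniform $[0,1]$ items, each item's optimal reserve yields revenue $1/4$, while the core sum concentrates at $n/2$.  Concentration of the \emph{sum} is a property that bundling mechanisms can exploit, not separable per-item pricing.  The paper's mechanism here is a genuinely new ingredient: the \emph{reserve-welfare mechanism} (Figure~\ref{fig:reservewelfare}), a VCG variant that refuses to allocate unless the total welfare exceeds a threshold, and prices each bidder at the threshold minus the others' welfare.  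This is what converts concentration into $(1-O(k\epsilon))$ revenue extraction.

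A third, smaller issue: your partition is value-dependent (per-$(i,j)$ realization), so the ``core'' and ``tail'' item sets are random and differ across bidders, which makes the glue step messy.  The paper's partition (Lemma~\ref{lem:partition}) is deterministic, based only on the expectations $\E[\max_i v_{ij}]$: it buckets items by expected value, puts the $O_\epsilon(1)$ items with largest expectations into a ``few-item'' set $R$ (handled by brute-force LP, Lemma~\ref{lem:fewitem}), shows the remaining nontrivial items have an $(\epsilon,\delta)$-concentrated welfare sum (handled by the reserve-welfare mechanism), and discards a tiny-expectation set.  Because the partition is fixed in advance, the two sub-mechanisms are run in parallel on disjoint item sets with no interaction, and BIC is immediate.
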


Although the above theorems are stated only for BIC mechanism here, our techniques can be extended to other solution concepts as well, such as IC and deterministic truthfulness. We will elaborate these theorems in the corresponding sections and explain the results for various solution concepts.

Besides achieving nearly optimal revenue, our mechanisms in Theorem \ref{thm:mainiid} and Theorem \ref{thm:main} have an additional appealing feature of using very simple allocation rules. In fact, all of our mechanisms essentially has one of the two following simple forms: 1) Run a second price auction with reserve price on every item individually. 2) Use the VCG allocation with a threshold welfare whose role is similar to the reserve price, except for a few outlying items which we need to handle separately. It is surprising that such simple allocation rules can actually obtain nearly optimal revenue.


\subsection{Overview of Techniques}

First let us explain by example why the obvious attempt of running Myerson's auction on every item individually fails. Consider a single bidder and $n$ items whose values are i.i.d.~and uniformly drawn from $[0, 1]$. On the one hand, Myerson's optimal auction only gets $\frac{1}{4}$ revenue per item. On the other hand, if $n$ is large, the total value of the grand-bundle concentrated at $\frac{n}{2}$. So a simple grand-bundle-reserve-price auction (e.g., \cite{Arm99}) can get almost $\frac{n}{2}$ revenue. 

One might also argue that when the bidders' values are additive, the overall values will be concentrated and thus it is easy to find the optimal. But as items are non-i.i.d., we may not have such a concentration phenomena in some cases.\footnote{For instance, consider an item whose value is uniformly drawn from $[1/2, 1]$ and $n-1$ items whose values are i.i.d.~and uniformly drawn from $[0, \frac{1}{n^2}]$.}

Instead, our first technical contribution is by understanding the probabilistic structure to prove the following structural lemma which we will use heavily:

\bigskip
\noindent{\bf Partitioning Lemma (Informal).~} {\em Assuming MHR distributions, then we can partition the items into two sets, where the first set contains only a constant number of items, and the second set has many items but the social welfare of which highly concentrates.}

\bigskip

Based on this lemma, we manage to reduce the problem of finding nearly-optimal mechanisms for many independent items into two simpler sub-problems: Designing nearly-optimal mechanisms for a constant number of independent items, and designing nearly-optimal mechanisms when the total value of the items concentrates. The formal statement of the partition lemma and its proof will be given in Section \ref{sec:decomposition}.

\paragraph{Constant Number of Bidders} 

In this case, designing nearly-optimal mechanisms for the sub-problem with only a constant number of items is almost folklore and we sketch these mechanisms in Section \ref{sec:constant}. In order to handle the second sub-problem, we propose a novel mechanism that falls into the VCG family, which we shall introduce as the {\em reserve welfare mechanism} in Section \ref{sec:reservewelfare}. The reserve welfare mechanism allocates items to the bidders only if the social welfare exceeds a certain reserve welfare, in which case it will use the welfare-maximizing allocation. We show that with the proper pricing scheme, the reserve welfare mechanism is deterministically truthful and solves the welfare-concentrated case nearly optimally. The proof of Theorem \ref{thm:main} follows by combining these technical ingredients.

\paragraph{Many I.I.D.~Bidders}

The key observation in this case is that when the number of bidders is sufficiently large, simply running second price auction with a properly chosen reserve price for each item suffices to guarantee nearly optimal revenue. More concretely, inspired by Theorem 7 in \cite{CD}, we can argue that for any constant $\epsilon > 0$, if the number of bidders is larger than an absolute constant that only depends on $\epsilon$, then for every item there is a second price auction with reserve price that achieves revenue at least a $(1-\epsilon)$ fraction of the social welfare. In \cite{BK}, Bhalgat and Khanna have independently provided similar insights when there are sufficiently many i.i.d.~bidders. On the other hand, if the number of bidders is smaller than this absolute constant, then we can reduce the problem to Theorem \ref{thm:main}.

\subsection{Related Work}
\label{sec:related}

The theory of computation community has contributed many computational efficient solutions to various special cases of the multi-dimensional mechanism design problem. Chawla et.al~\cite{CHK} consider the case of a single unit-demand bidder, and propose an item pricing mechanism that achieves a constant factor approximation of the optimal. Their result is based on an elegant reduction to Myerson's optimal auction in the single-dimensional setting. For the same problem, Cai and Daskalakis \cite{CD} propose a PTAS for optimal item-pricing, thus close the constant approximation gap. 

In the multi-bidder setting, \cite{CHMS,BGGM,Alaei} provide efficient constant factor approximations for cases when the bidders are additive or unit-demand. More recently, near-optimal solutions have been obtained for several cases. Daskalakis and Weinberg \cite{DW} solve the case where there are few bidders with symmetric\footnote{See \cite{DW} for a formal definition of symmetric distributions. E.g., i.i.d.~distributions is symmetric, but general independent distributions are not.} items or symmetric bidders with few items. 

{For asymmetric distributions, Cai et al.~\cite{CDW12a} give the optimal solution to the many-bidder and many-item setting. Alaei et al.~\cite {AlaeiFHHM12} consider serving many copies of an item with a matroid feasibility constraint on which bidders can be served an item simultaneously, and obtain the optimal solution. In \cite{CDW12b}, Cai et al.~provide the optimal solution for a much more general setting where bidders are additive, and with (possibly) arbitrary feasibility constraints, by reducing the revenue optimization to welfare optimization. Their reduction provides a poly-time solution to the optimal mechanism design problem in all auction settings where welfare optimization can be solved efficiently. However, it is fragile to approximation, as the reduction requires an exact solution for the welfare optimization problem. In \cite{CDW12c}, the same group of authors show that even when the welfare optimization problem is only approximately solvable, they can still carry over the reduction while preserving the approximation factor. All of these
algorithms allow correlation among items, so the total number of 
bidder types is the natural input size.
However, for independent but not necessarily identical items, even when support size of every
value distribution is only $2$ and bidders are i.i.d., the total number of 
bidder types could still be as large as $2^{n}$, making their
algorithm highly inefficient in our setting. Nonetheless, for symmetric
items, Cai et al.~\cite{CDW12a} show how to reduce the ``effective
number'' of types by utilizing the symmetric structure of the items,
yielding mechanisms that are polynomial in both $n$ and $m$. But designing nearly optimal auctions for asymmetric items remains open prior to our work even for a single bidder.


Our result can be viewed as an improvement of \cite{DW} and a complement
to \cite{CDW12a}. Although the results are related, the techniques are orthogonal. The approaches in \cite{DW} and \cite{CDW12a} are
LP-based, and they use symmetry to reduce the size of the LP. We take a
different path. By understanding the probabilistic structure, we
argue that the social welfare of most of the items are highly
concentrated, and can be easily extracted by the seller using a modified
VCG mechanism. Further, for the other constant number of items, as there
    are only a small number of possible types, they can be easily handled by
previous results (e.g., \cite{DW}).}  


\section{Preliminaries}\label{sec:prelim}

\subsection{Model}

Formally, in an {\em multi-item auction}, a seller has $n$ heterogeneous items that she wants to auction to $m$ quasi-linear risk-neutral bidders. Each bidder $i$ has a private valuation profile $v_i = (v_{i1}, \dots, v_{in}) \in \R^n$, where $v_{ij}$ is bidder $i$'s value for item $j$. $v_i$ is sometimes referred to as the type of the bidder $i$.  We will assume the valuation function to be additive, that is, $v_i(S) = \sum_{j \in S} v_{ij}$ for any $S \subseteq [n]$. We will let $v_{-i}$ denote the type profile of every bidder except $i$.

A {\em mechanism} $M$ consists of two parts: An allocation rule $x(\cdot)$ and a payment rule $p(\cdot)$. 

The {\em allocation rule} $x(\cdot)$ maps a type profile $\bv$ to a feasible allocation $x(\bv) = \{x(\bv)_{ij}\}_{i \in [m], j \in [n]}$, where $x(\bv)_{ij}$ is the probability for bidder $i$ to receive item $j$ when the type profile is $\bv$. For deterministic mechanisms, we will let $x(\bv)_{ij}$ to be either $0$ or $1$. We will let $x(\bv)_i$ denote the $n$ dimensional vector $(x(\bv)_{i1}, \dots, x(\bv)_{in})$. 

The {\em payment rule} maps a type profile $\bv$ to a $m$-dimensional real vector $p(\bv) = (p_1(\bv), \dots, p_m(\bv))$, where $p_i(\bv)$ is the price charged to bidder $i$. 

Since the valuations are private information of the bidders, the mechanism needs to retrieve these information from the bidders, who may or may not manipulate the information. We will let $\bm{b} = (b_1, \dots, b_m)$ denote the {\em bids} of the bidders. Given the bids, the allocation $x(\bm{b})$, and the payments $p(\bm{b})$, we will assume the bidders are utility maximizers w.r.t.~the standard notion of {\em quasi-linear utility}:
$$u_i(v_i, x(\bm{b}), p(\bm{b})) = v_i \cdot x_i(\bm{b}) - p_i(\bm{b}) \enspace.$$

We will consider the {\em Bayesian setting}. Namely, we will assume that the valuations $v_{ij}$, $i \in [m]$ and $j \in [n]$, are drawn from some publicly known independent (but not necessarily identical) distributions $\calF_{ij}$. When bidders are from the same population, we will use $\mathcal{F}_{j}$ to denote the value distribution for item $j$, and omit subscript $i$. We will let $F_{ij}(x)$ and $f_{ij}(x)$ denote the cumulative distribution function and probability density function of $\calF_{ij}$ respectively. 

 
Next, we formally define how the distributions $\calF_{ij}$ are specified to the mechanism. We shall consider two different models. The first one is the {\em discrete explicit access} model, where the support of each $\calF_{ij}$ is discrete and explicitly given, and so is the probability of each value in the support being chosen. The second one is the {\em continuous oracle access} model, where the support of $\calF_{ij}$ could be continuous and even unbounded. In the latter, we will assume there is an oracle sampler such that each access to the oracle returns a random value drawn from $\calF_{ij}$. In the former case, the running time of our mechanisms shall be polynomial in the sum of the support sizes of $\calF_{ij}$ for all $i \in [m]$ and $j \in [n]$. In the latter case, both the running time and the number of accesses to the oracle of our mechanisms shall be polynomial in $m$ and $n$.

\subsection{Solution Concepts} 

We will consider the following standard game-theoretic solution concepts:

\begin{definition}
  A deterministic mechanism $M$ is {\em deterministically truthful} (DT), if truth-telling is a utility-maximizing strategy, i.e.,
  $$\forall b_{-i} : \quad v_i \in \arg\max_{b_i} \{v_i \cdot x_i(b_i, b_{-i}) - p(b_i, b_{-i}) \} \enspace.$$
\end{definition}

\begin{definition}
  A randomized mechanism $M$ is {\em truthful-in-expectation} or {\em incentive compatible} (IC) if truth-telling maximizes the expected utility, i.e.,
  $$\forall b_{-i} : \quad v_i \in \arg\max_{b_i} \{ \E[v_i \cdot x_i(b_i, b_{-i}) - p(b_i, b_{-i})] \} \enspace.$$
  where expectation is over random coin-flips of the mechanism.
\end{definition}

\begin{definition}\label{def:BIC}
  A (randomized) mechanism $M$ is {\em Bayesian-incentive-compatible} (BIC) if truth-telling maximizes the expected utility, i.e.,
  $$\forall b_{-i} : \quad v_i \in \arg\max_{b_i} \{ \E_{b_{-i}\sim\mathcal{F}_{-i}}[v_i \cdot x_i(b_i, b_{-i}) - p(b_i, b_{-i})] \} \enspace,$$
  where expectation is over random coin-flips of the mechanism and random realization of the valuations of other bidders.
\end{definition}

We will also consider the following relaxed notions of deterministic truthfulness.

\begin{definition}
  A deterministic mechanism $M$ is {\em $\eps$-deterministically truthful} ($\eps$-DT), if 
  $$\forall b_i, b_{-i} : \quad v_i \cdot x_i(b_i, b_{-i}) - p(b_i, b_{-i}) \le v_i \cdot x_i(v_i, b_{-i}) - p(v_i, b_{-i}) + \eps \enspace.$$
\end{definition}

The notions of $\epsilon$-IC and $\epsilon$-BIC are defined similarly.

\medskip 

Further, it is very important not to overcharge the bidders, especially
when we are aiming for revenue.

\begin{definition}
  A mechanism $M$ is {\em individually rational} (IR) if the utility of any bidder in any outcome is always non-negative, i.e., 
  $$\forall b_{-i} : \quad v_i \cdot x_i(v_i, b_{-i}) - p(v_i, b_{-i}) \ge 0 \enspace.$$
\end{definition}


The following {\em taxation principle} is a well known characterization for truthful mechanisms (e.g., see \cite{Dobzinski, DV}) which will be useful for our discussion. 

\begin{theorem}[Taxation Principle]
  A mechanism is DT/IC if and only if each bidder $i$ is presented a menu of bundles/lotteries of items such that the prices of the bundles/lotteries only depends on the other bidders' valuations $v_{-i}$, and bidder $i$ always gets one of the utility maximizing bundles/lotteries. 
\end{theorem}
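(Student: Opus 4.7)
The plan is to prove the two directions separately, treating the DT case explicitly and noting that the IC case follows by replacing deterministic allocations with lotteries and point-values with expected values throughout.

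For the $(\Leftarrow)$ direction, suppose bidder $i$ is offered a menu $\calM_i(b_{-i})$ of (bundle, price) pairs that depends only on $b_{-i}$, and the mechanism always awards her some utility-maximizing item on this menu. Fix any $b_{-i}$ and any potential misreport $b_i$. Since the menu is determined solely by $b_{-i}$, the outcome bidder $i$ receives by reporting $b_i$ is, by construction, a menu entry, call it $(x', p')$. Similarly, truthful reporting yields some $(x^*, p^*) \in \calM_i(b_{-i})$ which the mechanism selected as utility-maximizing at value $v_i$. Hence $v_i \cdot x^* - p^* \ge v_i \cdot x' - p'$, which is exactly the DT condition. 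The IC case is identical after taking expectations over the mechanism's internal randomness.

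For the $(\Rightarrow)$ direction, given a DT mechanism with allocation rule $x(\cdot)$ and payment rule $p(\cdot)$, I would define, for each bidder $i$ and each profile $b_{-i}$ of others' reports,
\[
  \calM_i(b_{-i}) \;\eqdef\; \bigl\{ \bigl( x_i(b_i, b_{-i}),\; p_i(b_i, b_{-i}) \bigr) \;:\; b_i \text{ is a possible type of bidder } i \bigr\}.
\]
By construction, the prices of entries in $\calM_i(b_{-i})$ depend only on $b_{-i}$ (the choice $b_i$ of menu entry is made by bidder $i$ herself). When bidder $i$ truthfully reports $v_i$, she receives the entry $(x_i(v_i, b_{-i}), p_i(v_i, b_{-i})) \in \calM_i(b_{-i})$. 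The DT inequality applied to every possible $b_i$ says that this entry maximizes $v_i \cdot x - p$ over all entries in $\calM_i(b_{-i})$, so truthful reporting indeed selects a utility-maximizing menu item. For the IC case, one replaces bundles by the induced lotteries $x_i(b_i, b_{-i})$ (now viewed as distributions over allocations) and replaces deterministic payments by expected payments; the same definition of $\calM_i(b_{-i})$ works, with the IC condition being precisely the statement that truthful reporting maximizes expected utility over the menu.

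I do not expect any substantive obstacle here: the result is essentially a definitional reshuffling. The only subtlety worth being careful about is ensuring that in the IC case, both the randomized allocation and the (possibly randomized) payment depend on $b_i$ only through the menu entry selected, which is handled by bundling the lottery over $[n]$ together with the expected payment into a single menu entry, so that two reports yielding the same expected outcome correspond to the same menu item.
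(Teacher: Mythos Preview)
Your argument is correct and is the standard proof of the taxation principle. Note, however, that the paper does not supply its own proof of this theorem: it is quoted as a known characterization with citations to \cite{Dobzinski, DV}, so there is no in-paper proof to compare against. Your write-up would serve perfectly well as a self-contained justification.
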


In particular, if there is only one bidder, then such menus are fixed regardless of the reported value. So any DT/IC mechanism can be viewed as a bundle-pricing/lottery-pricing of the items.

\subsection{Extreme Value Theorem} 

Throughout this paper, we will consider distributions that have {\em monotone hazard rate} (MHR): 

\begin{definition}
  \label{def:MHR}
  A distribution $\calF$ has {\em monotone hazard rate} if $\frac{f(x)}{1 - F(x)}$ is non-decreasing in the support of $\calF$.
\end{definition}

The MHR distributions is a commonly studied family of distributions in
Economics and recently in the algorithmic game theory community. It includes familiar distributions such as the Normal,
Exponential, and Uniform distributions. Intuitively, a distribution has
monotone hazard rate if its tail is at most as large as that of an
Exponential distribution. We note that in our results the MHR assumption
can be replaced by the following assumption: There exists a constant $C$
such that the for each bidder $i$ and item $j$ the support of
$\calF_{ij}$ is an interval whose upper and lower bounds differs by at
most a $C$ multiplicative factor. In other words, our algorithms work
well as long as we have a rough idea on each bidder $i$'s value on each
item $j$. In fact, this is the alternative assumption we will use for the discrete explicit access model.

We will use the following {\em extreme value theorem} for MHR distributions developed in \cite{CD} as an important technical tool in our proofs. Readers are referred to \cite{CD} for the proof of the theorem.

\begin{theorem}[Extreme Value Theorem \cite{CD}] \label{lem:extremevalue}
  Suppose $X_1, \dots, X_n$ are a collection of independent (but not necessarily identically distributed) random variables whose distributions are MHR, and $f_{\max_i \{X_i\}}$ is the probability density function of the random variable $\max_i X_i$. Then, for all $\epsilon \in (0, \frac{1}{4})$, there exists some anchoring point $\beta$ such that $\Pr[\max_i X_i \ge \frac{\beta}{2}] \ge 1 - {1\over\sqrt{e}}$ and 
  $$\int^{+\infty}_{2 \beta \log(\frac{1}{\epsilon})} t \cdot f_{\max_i \{X_i\}}(t) dt \le 36 \beta \epsilon \log \left(\frac{1}{\epsilon}\right) \enspace.$$
  Moreover, $\beta$ is efficiently computable from the distributions of the $X_i$'s.
\end{theorem}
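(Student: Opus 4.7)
The plan is to choose a single anchor $\beta$ (independent of $\epsilon$) as the $e^{-1/2}$-quantile of $\max_i X_i$ from below, i.e., to pick $\beta$ satisfying $\prod_i F_i(\beta) = e^{-1/2}$; such a $\beta$ is efficiently computable from the $F_i$'s by binary search, or in the oracle model by Monte Carlo estimation of the product. With this choice, $\Pr[\max_i X_i > \beta] = 1 - 1/\sqrt{e}$, and since MHR variables have non-negative support the event $\{\max_i X_i > \beta\}$ is contained in $\{\max_i X_i \ge \beta/2\}$, giving the first inequality of the theorem immediately.

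Taking logarithms of the defining condition yields $\sum_i (-\ln F_i(\beta)) = \tfrac{1}{2}$. The elementary bound $-\ln(1-u) \ge u$, applied with $u = 1 - F_i(\beta)$, then produces the key estimate
$$\sum_i \bigl(1 - F_i(\beta)\bigr) \;\le\; \sum_i \bigl(-\ln F_i(\beta)\bigr) \;=\; \tfrac{1}{2},$$
so that in particular $1 - F_i(\beta) \le 1/2$ for every $i$.

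Next I would convert the MHR hypothesis into an exponential tail on $\max_i X_i$ beyond $\beta$. Because the hazard rate $h_i = f_i/(1-F_i)$ is non-decreasing, the function $g_i(x) := -\ln(1-F_i(x))$ is non-negative and convex on the support of $X_i$ and vanishes at its left endpoint. Secant-slope monotonicity for convex functions through the origin gives $g_i(t) \ge (t/\beta)\, g_i(\beta)$ for all $t \ge \beta$, i.e., $1 - F_i(t) \le (1-F_i(\beta))^{t/\beta}$. Writing $a_i := 1 - F_i(\beta)$ and combining a union bound with $\max_i a_i \le \sum_i a_i \le 1/2$ from the previous paragraph, I obtain for every $t \ge \beta$,
$$\Pr[\max_i X_i \ge t] \;\le\; \sum_i a_i^{\,t/\beta} \;\le\; \Bigl(\max_i a_i\Bigr)^{t/\beta - 1} \sum_i a_i \;\le\; 2^{-t/\beta}.$$

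With this pointwise tail in hand, the integral estimate is a short calculation. Setting $t^\star := 2\beta \log(1/\epsilon)$, integration by parts gives
$$\int_{t^\star}^{+\infty} t\, f_{\max_i X_i}(t)\, dt \;=\; t^\star \Pr[\max_i X_i \ge t^\star] \;+\; \int_{t^\star}^{+\infty} \Pr[\max_i X_i \ge t]\, dt,$$
and both pieces are dominated by the exponential tail $2^{-t/\beta}$, contributing $O(\beta \log(1/\epsilon) \cdot \epsilon^{c})$ and $O(\beta \cdot \epsilon^{c})$ respectively, where $c > 1$ regardless of whether $\log$ denotes $\ln$ or $\log_2$; the constant $36$ in the theorem is generous enough to absorb everything for $\epsilon \in (0, 1/4)$. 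I expect the main technical point to be combining the per-coordinate MHR secant bound with the union bound in just the right way: matching the ``total tail mass'' $\sum_i (1-F_i(\beta))$ against the exponential rate $t/\beta$ so that the bound on $\Pr[\max_i X_i \ge t]$ decays exponentially in $t/\beta$ rather than merely as $O(1/t)$. Everything downstream is standard.
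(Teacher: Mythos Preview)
The paper does not give its own proof of this statement: it is quoted verbatim from Cai--Daskalakis~\cite{CD}, with the remark ``Readers are referred to \cite{CD} for the proof of the theorem.'' So there is no in-paper argument to compare against.

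Your argument is correct and is essentially the standard route used in \cite{CD}: anchor at a quantile of $\max_i X_i$, convert the MHR hypothesis into convexity of $g_i(x)=-\ln(1-F_i(x))$ to obtain the per-coordinate exponential tail $1-F_i(t)\le (1-F_i(\beta))^{t/\beta}$, then combine a union bound with the quantile identity $\sum_i(1-F_i(\beta))\le \sum_i(-\ln F_i(\beta))=\tfrac12$ to get $\Pr[\max_i X_i\ge t]\le 2^{-t/\beta}$, and finish by integration by parts. One small wording issue: you say ``through the origin'' but only verify $g_i(\alpha_i)=0$ at the left endpoint $\alpha_i$ of the support; this is harmless, since for $\alpha_i\ge 0$ and $t\ge\beta$ one has $(t-\alpha_i)/(\beta-\alpha_i)\ge t/\beta$, so the secant bound from $\alpha_i$ already implies the one you use. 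The final constant comparison is loose enough that either reading of $\log$ works, as you note.
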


Based on the above extreme value theorem and an additional probabilistic
argument, we will show in Section \ref{sec:decomposition} that the social
welfare of some carefully chosen subset of items highly concentrates. In
particular, we will consider the followings notion of concentration.

\begin{definition}
  A random variable $X$ is {\em $(\epsilon, \delta)$-concentrated} if $X \in (1-\epsilon, 1+\epsilon) \E[X]$ with probability at least $1-\delta$.
\end{definition}

\section{Nearly Optimal Mechanism for Constant Number of Bidders}  
\label{sec:decomposition}

{In this section, we will consider the case when there are only 
$k$ bidders, where $k$ is an absolute constant.\footnote{We note that our mechanisms can be
extended to the case of $O(\log^c n)$ bidders for sufficiently small
constant $c$ via almost identical proofs, where $c$ depends on the
solution concept. However, we feel such extension is not very
insightful. So we will only present the case for a constant number of
bidders in this extended abstract for the sake of presentation.} We will
prove that for various solution concepts, the problem of finding
revenue-optimal truthful mechanisms can be solved under a unified
framework. Formally, our results can be summarized as the following
theorem.

\bigskip
\noindent\begin{minipage}{\textwidth}
    \begin{theorem}[Thm.~\ref{thm:main} elaborated] \label{thm:mainrestate}
    Suppose the number of bidders is a constant. Then, there is a PTAS (polynomial in $n$) for finding revenue-optimal mechanisms among (all settings require IR): 
    \begin{itemize} 
        \setlength{\partopsep}{0pt}
        \setlength{\topsep}{0pt}
        \setlength{\parsep}{0pt} 
        \setlength{\itemsep}{0pt}
        \item IC/BIC mechanisms with discrete explicit access.
        \item DT mechanisms with discrete explicit
            access.\footnotemark[1]
        \item DT/IC mechanisms with continuous oracle access for a single bidder.
        \item BIC mechanisms with continuous oracle access.
        \item DT/IC mechanisms with continuous oracle access.\footnotemark[1]
    \end{itemize}
    \footnotetext{\footnotemark[1]~ Our mechanisms in these cases are only
    $\epsilon$-deterministically truthful and $\epsilon$-IC.}
\end{theorem}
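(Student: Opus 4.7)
The plan is to prove Theorem~\ref{thm:mainrestate} by combining the Partitioning Lemma with two ``module'' mechanisms whose outputs can be glued together on disjoint sets of items. First, apply the Partitioning Lemma to split $[n]$ into a small set $S$ of ``tail'' items with $|S| = O_\epsilon(1)$ and a bulk set $L = [n] \setminus S$ whose social welfare $W_L = \sum_{i,j\in L} v_{ij} x_{ij}$ under the welfare-maximizing allocation is $(\epsilon,\epsilon)$-concentrated. Since the value distributions across $S$ and $L$ are independent, any mechanism that runs two sub-mechanisms $M_S$ and $M_L$ in parallel (each acting only on its own items) inherits IR and each solution-concept property (IC/BIC/DT, possibly up to $\epsilon$ slack) from its components, and the revenues add. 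Thus it suffices to build a near-optimal $M_S$ for $S$ and a near-optimal $M_L$ for $L$, then compare the sum to $\opt$ via $\opt \le \opt_S + \opt_L$.

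For the sub-problem on $S$, the instance has constantly many items and constantly many bidders, so the state space is already ``small'' up to discretization of values. In the discrete explicit access model, the number of type profiles is polynomial, so a direct LP (following the symmetric-items machinery of \cite{DW} for BIC, or an enumeration over menus using the taxation principle for DT) yields an exactly optimal mechanism in poly-time; to get true DT one must accept $\epsilon$-DT because of the LP rounding. In the continuous oracle model we first discretize each $\calF_{ij}$ restricted to $S$ into an $O_\epsilon(1)$-support distribution (its support can be assumed to lie in an interval of bounded multiplicative range using the MHR tail bounds of Theorem~\ref{lem:extremevalue}), estimate the discretized probabilities by $\poly(1/\epsilon)$ samples, and then invoke the discrete-case algorithm; the single-bidder case is simpler because the taxation principle reduces the problem to pricing a constant-sized menu of lotteries.

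For the sub-problem on $L$, the key device is the reserve welfare mechanism advertised in Section~\ref{sec:reservewelfare}. The idea: compute the VCG welfare-maximizing allocation on $L$; allocate it only if the realized welfare exceeds a threshold $t$ chosen so that $t \approx (1-\epsilon)\E[W_L]$, and set prices so that each allocated bidder pays her VCG externality plus a share that makes the mechanism extract essentially all of $W_L$ when the welfare clears the threshold. Because $W_L$ is $(\epsilon,\epsilon)$-concentrated, the event $\{W_L < t\}$ happens with probability at most $\epsilon$, so the expected welfare lost is $O(\epsilon)\E[W_L]$; meanwhile the expected revenue is at least $(1-O(\epsilon))\E[W_L]$, which dominates $\opt_L$ since $\opt_L \le \E[W_L]$. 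Truthfulness is established by arguing that, conditional on the threshold being met, the payment rule is a Clarke-type pivot augmented by a flat per-bidder surcharge that does not depend on the reports of bidder $i$; a careful pricing choice makes it exactly DT with discrete access and $\epsilon$-DT once sample-based estimates of $t$ are used in the continuous oracle setting.

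The step I expect to be the main obstacle is proving that the reserve welfare mechanism is simultaneously (i) truthful under each relevant solution concept, (ii) individually rational, and (iii) extracts $(1-O(\epsilon))\E[W_L]$ in expectation. The tension is that charging each winning bidder her full VCG-value share could violate IR on low-type realizations, while charging less could leak revenue; the concentration of $W_L$ is what allows a single threshold-based price schedule to thread this needle, but one has to quantify carefully how the $\epsilon$-slack in the concentration propagates into the slack in the IC/IR inequalities, and how it degrades to $\epsilon$-DT / $\epsilon$-IC when the threshold $t$ is only known up to sampling error. Handling the continuous oracle case cleanly, in particular bounding the total sample complexity to $\poly(n,m,1/\epsilon)$ while preserving each truthfulness notion, is where the various bulleted cases of the theorem separate.
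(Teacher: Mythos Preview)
Your high-level plan matches the paper's: partition the items, brute-force the small set, run the reserve-welfare mechanism on the concentrated set, and glue via a sub-additivity bound on $\opt$. Two specific corrections are worth making, though.

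First, the reserve-welfare pricing is not ``Clarke pivot plus a flat surcharge''; it is the VCG family with the \emph{constant} pivot $\hat s = (1-\epsilon)\,\E[W_L]$, i.e.\ $p_i = \hat s - \sum_{\ell\ne i}\sum_{j\in S_\ell} b_{\ell j}$. With this pivot the total payment when items are allocated is $k\hat s - (k-1)W_L$, which is close to $W_L$ precisely because $W_L$ concentrates near $\hat s/(1-\epsilon)$; a genuine Clarke pivot with a constant add-on would not give this identity, since the Clarke pivots can be much smaller than $W_L$. Once the pivot is set up this way, truthfulness and IR are short: it is VCG with an arbitrary pivot (hence DT), and the abort-if-$W_L<\hat s$ step restores IR. So this is \emph{not} where the difficulty lies.

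Second, and more importantly, you have the source of the $\epsilon$-slack backwards. The reserve-welfare mechanism is \emph{exactly} DT for any fixed threshold, including a threshold obtained by sampling; sampling error in $\hat s$ affects only the revenue bound, never the incentive constraints. All of the $\epsilon$-DT / $\epsilon$-IC degradation in the theorem statement comes from the \emph{few-item} module $M_S$: in the discrete model the DT case must coarsen values to shrink the integer program to constant size, and in the continuous oracle model the multi-bidder cases round bids to a discrete grid before invoking the discrete-case algorithm (with an $\epsilon$-BIC-to-BIC repair via \cite{DW} in the BIC case). The bulleted case split in the theorem is therefore driven entirely by how Lemma~\ref{lem:fewitem} is instantiated, not by the $L$ side.

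A minor point: the formal Partitioning Lemma (Lemma~\ref{lem:partition}) produces three groups, not two---besides the small set and the concentrated set there is a throw-away set $T$ with $\sum_{j\in T}\E[\max_i v_{ij}] \le \epsilon \sum_j \E[\max_i v_{ij}] = O(\epsilon)\cdot\opt$ by Lemma~\ref{lem:constantfactor}. Your $L$ should be only the concentrated group, with $T$ simply discarded.
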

\end{minipage}
\bigskip

The general proof strategy of Theorem \ref{thm:mainrestate} is to reduce 
the problem of designing almost optimal mechanisms for 
the multi-item auction problem into two easier sub-problems (assuming
MHR distributions). More precisely, we will prove that if there are
PTAS for the special cases in the next two lemmas, then it is
possible to combine the nearly optimal mechanisms for these two cases to
derive the PTAS in Theorem \ref{thm:mainrestate}.

\begin{lemma}[Few-Item Case] \label{lem:fewitem}
    Theorem \ref{thm:mainrestate} holds if both the number of items and the number of bidders are constants.
\end{lemma}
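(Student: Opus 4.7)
The plan is to reduce the continuous (or arbitrarily-large-support) instance to one in which every marginal distribution $\calF_{ij}$ has constant support, after which brute-force LP methods can find the revenue-optimal mechanism exactly for any of the solution concepts listed in Theorem~\ref{thm:mainrestate}. Since both $n$ and $k$ are absolute constants, the total number of joint bidder types in the discretized instance will also be a constant (depending only on $\epsilon$), so the problem becomes a constant-size optimization.

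First I would truncate the high tails of each $\calF_{ij}$ using the Extreme Value Theorem (Theorem~\ref{lem:extremevalue}): it produces an anchor $\beta_{ij}$ such that the contribution of the tail above $T_{ij}:=2\beta_{ij}\log(1/\epsilon)$ to the expected welfare is only $O(\beta_{ij}\epsilon\log(1/\epsilon))$. Since revenue is upper bounded by welfare, capping every value at $T_{ij}$ (say, by reselling above the cap at a take-it-or-leave-it price $T_{ij}$, or simply discarding mass above it) loses only an $O(\epsilon\log(1/\epsilon))$ fraction of the optimal revenue. Symmetrically, values below $\epsilon\beta_{ij}/n$ can be rounded down to $0$ with negligible welfare (hence revenue) loss. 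Second, on the remaining bounded range $[\epsilon\beta_{ij}/n,\,T_{ij}]$ I would discretize to a geometric grid with ratio $1+\epsilon$, giving $O(\log^2(1/\epsilon)/\epsilon)$ grid points, a constant in $\epsilon$. In the continuous oracle access model, the rounded distributions are not known explicitly but can be estimated to within total-variation distance $\epsilon/\poly$ using $\poly(1/\epsilon)$ samples per $(i,j)$ pair, since the support is of constant size.

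Once every $\calF_{ij}$ is a constant-support distribution, the joint distribution over type profiles has constant support $S\le (\text{constant})^{kn}$. At this point the revenue-optimal mechanism, under any of the five solution concepts in Theorem~\ref{thm:mainrestate}, is the optimum of an LP whose variables are the allocation probabilities $x_i(\bv)_j$ and the (interim or ex-post) payments, whose constraints enforce feasibility, IR, and the relevant truthfulness notion (BIC/IC constraints are one per type-deviation pair; DT constraints apply to ex-post utilities). Since the LP has constant size, it can be solved exactly in $O(1)$ time; the number of bidders and items being constant ensures that describing the mechanism at a given profile is also constant work, so overall the algorithm is polynomial (indeed constant) in $n$ and $m$. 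For the footnoted cases where $\epsilon$-DT or $\epsilon$-IC is the target, one simply runs the IC LP and invokes a standard rounding of the IC payment rule into a deterministic $\epsilon$-DT allocation by interpreting lotteries as ex-post randomization.

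The main obstacle will be the revenue-preservation lemma that justifies step one: showing that an $\alpha$-optimal mechanism for the truncated/rounded distribution yields an $(\alpha+O(\epsilon))$-optimal mechanism for the original one, with the same (or a negligibly relaxed) truthfulness guarantee. This requires a coupling argument between the true value $v_{ij}$ and its rounded image $\tilde v_{ij}$: deploying the mechanism designed for $\tilde v$ on the true bids can at worst shift each bidder's expected utility by $O(\epsilon)$ per item times the relevant $\beta_{ij}$, and by IR plus the truncation bound this translates to an $O(\epsilon\log(1/\epsilon))$ additive revenue loss after normalization by the trivial lower bound $\Omega(\beta_{ij})$ on optimal per-item revenue (which itself follows from the concentration around $\beta_{ij}/2$ in Theorem~\ref{lem:extremevalue}). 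For BIC the perturbation shows up inside the interim expectation and can be absorbed; for DT one only achieves $\epsilon$-DT, which matches exactly the footnoted weakening in the statement of Theorem~\ref{thm:mainrestate}.
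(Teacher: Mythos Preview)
Your overall plan---truncate via the extreme value theorem, discretize each marginal to a geometric grid, then solve a constant-size LP---matches the paper's approach for the discrete-access bullets and for the footnoted continuous bullets ($\epsilon$-DT/$\epsilon$-IC with multiple bidders). But two of the five bullets in Theorem~\ref{thm:mainrestate} demand \emph{exact} truthfulness in the continuous model, and value discretization does not deliver that.

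For the single-bidder continuous DT/IC bullet (not footnoted), running the discretized-LP mechanism on rounded bids gives only $\epsilon$-DT/$\epsilon$-IC: a bidder whose true value lies off the grid can gain up to $O(\epsilon)$ by misreporting to a different grid point. The paper avoids this by discretizing the \emph{mechanism} space rather than the value space. By the taxation principle, any single-bidder DT (resp.\ IC) mechanism is a bundle pricing (resp.\ lottery pricing), and any menu is exactly truthful by construction. The paper then proves a price-discretization lemma (Nisan) and a new lottery-discretization lemma showing that restricting to a constant-size set of prices/lotteries loses only an $O(\epsilon)$ fraction of revenue; brute-force search over these menus (estimating revenue by sampling) yields an exactly DT/IC mechanism that is near-optimal. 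Similarly, the continuous BIC bullet (also not footnoted) requires exact BIC: the paper first obtains an $\epsilon$-BIC mechanism by value discretization just as you propose, but then invokes the $\epsilon$-BIC$\to$BIC transformation of Daskalakis and Weinberg~\cite{DW} to restore exact BIC with a further $O(\epsilon)$ revenue loss. Your proposal is missing both of these ingredients.

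Finally, your treatment of the $\epsilon$-DT bullets is not correct. ``Interpreting lotteries as ex-post randomization'' of the IC LP optimum produces a randomized (uniformly truthful) mechanism, not a deterministic one, and in any case the IC optimum can strictly exceed the DT optimum, so it is the wrong benchmark to approximate. The paper instead solves the \emph{integer} program with $x_{ij}\in\{0,1\}$ directly; after rounding values to multiples of $\epsilon$ the IP has constant size and can be enumerated, giving an $\epsilon$-DT mechanism whose revenue is compared to the correct (DT) optimum.
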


\begin{lemma}[Concentrated Case] \label{lem:concentrated}
    Suppose the optimal social welfare is $(\epsilon, \delta)$-concentrated, and the number of bidders is a constant. Then, there is a polynomial-time and deterministically truthful mechanism whose expected revenue is at least $(1 - f(\epsilon, \delta))$ fraction of the expected optimal social welfare, where $f(\epsilon, \delta)$ goes to zero as $\epsilon$ and $\delta$ goes to zero.
\end{lemma}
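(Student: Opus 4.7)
The plan is to use a menu-based mechanism that augments VCG item pricing with per-bidder entry fees; this fits the reserve welfare template advertised in Section \ref{sec:reservewelfare}. Concretely, given the others' reports $v_{-i}$, I present bidder $i$ with the menu whose price at any subset $S\subseteq[n]$ is $\sum_{j\in S}r_{ij}(v_{-i}) + T_i\cdot\mathbf{1}[S\ne\emptyset]$, where $r_{ij}(v_{-i}):=\max_{i'\ne i}v_{i'j}$ is the VCG reserve for item $j$ and $T_i$ is an entry fee depending only on the prior. Since each bidder faces a menu that depends only on $v_{-i}$, the taxation principle makes the mechanism deterministically truthful and individually rational (the empty bundle is always free). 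Given true $v_i$, bidder $i$'s utility-maximizing choice is either $\emptyset$ or her VCG bundle $S^*_i:=\{j:v_{ij}>r_{ij}\}$; she opts in iff $u^{VCG}_i(v):=\sum_j\max(v_{ij}-r_{ij},0)\ge T_i$. Because at most one bidder is the highest on any single item, the chosen $S^*_i$'s are pairwise disjoint, so the resulting allocation is always feasible.

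I would set $T_i:=(1-\epsilon)\E\bigl[u^{VCG}_i\bigr]$, so that $T:=\sum_i T_i\le(1-\epsilon)\E[W]$, where $W$ is the optimal social welfare. If every bidder participates, the collected revenue equals $\sum_j(\text{second-max}_i v_{ij})+T=W(v)-\sum_iu^{VCG}_i(v)+T$, whose expectation is $\E[W]-\epsilon\,\E\bigl[\sum_iu^{VCG}_i\bigr]\ge(1-\epsilon)\E[W]$. Thus the proof reduces to bounding the expected revenue lost on opt-out events. Since $u^{VCG}_i$ is a sum of $n$ independent non-negative terms, each bounded by the MHR order statistic $\max_{i'}v_{i'j}$, the extreme value theorem (Theorem \ref{lem:extremevalue}) together with a Bernstein-style tail bound gives $\Pr[u^{VCG}_i<T_i]\le\delta'(\epsilon,\delta)$, and a union bound over the $k=O(1)$ bidders handles the joint event. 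The opt-out loss itself is at most $\E\bigl[W\cdot\mathbf{1}[\text{some opt-out}]\bigr]$, which I would estimate by truncating $W$ at the MHR tail cutoff from Theorem \ref{lem:extremevalue} and applying Cauchy--Schwarz, yielding an $o(\E[W])$ correction.

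The main obstacle is extracting individual concentration of $u^{VCG}_i=W-W_{-i}$ from the hypothesized $(\epsilon,\delta)$-concentration of $W$ alone: a difference of two relatively concentrated quantities need not itself be relatively concentrated if the means nearly cancel. I would resolve this with a dichotomy. If $\E[u^{VCG}_i]=O(\epsilon\,\E[W]/k)$, bidder $i$'s possible non-participation costs at most $O(\epsilon\,\E[W]/k)$ in expected revenue, so we may safely set $T_i=0$ and absorb the term into the final error. Otherwise $\E[u^{VCG}_i]=\Omega(\epsilon\,\E[W]/k)$, and the mean now dominates the per-term MHR variance in the independent-sum decomposition of $u^{VCG}_i$, so the Bernstein bound delivers the needed individual concentration. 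Combining the two cases gives $\E[R]\ge(1-f(\epsilon,\delta))\E[W]$ with $f(\epsilon,\delta)\to 0$ as $\epsilon,\delta\to 0$.
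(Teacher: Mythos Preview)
Your mechanism has a genuine gap: per-bidder entry fees force you to control each $u_i^{\mathrm{VCG}}=W-W_{-i}$ individually, and concentration of $W$ simply does not yield this. Your dichotomy does not close the hole, because the problematic case is when $\E[u_i^{\mathrm{VCG}}]$ sits near the threshold while $u_i^{\mathrm{VCG}}$ is driven by a single high-second-price item. Concretely, take $k=2$; item~$1$ has $v_{11}\sim\mathrm{Unif}[n,\,n+3\epsilon n]$ (MHR) and $v_{21}=n$; items $2,\dots,n$ have $v_{1j}=0$, $v_{2j}=1$. Then $W$ is $(\epsilon,0)$-concentrated around $\E[W]\approx 2n$, yet $u_1^{\mathrm{VCG}}=v_{11}-n\sim\mathrm{Unif}[0,3\epsilon n]$, so with $T_1=(1-\epsilon)\E[u_1^{\mathrm{VCG}}]$ bidder~$1$ opts out with probability $\approx\tfrac12$. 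When she opts out you lose not just $T_1$ but also the second price $r_{11}=n$ on item~$1$ (no other bidder buys it in your menu), so the expected revenue falls to about $\tfrac34\E[W]$ regardless of how small $\epsilon$ is. Your Bernstein step fails here because $u_1^{\mathrm{VCG}}$ is a \emph{single} term with variance a constant fraction of its squared mean; there is no ``independent-sum'' averaging to exploit. Pushing this instance into the ``small mean'' branch by raising the threshold constant does not help globally: for any fixed $C$ one can rescale the example so that $\E[u_1^{\mathrm{VCG}}]$ lands just above $C\epsilon\E[W]/k$, and your Cauchy--Schwarz estimate on $\E[W\cdot\mathbf{1}[\text{opt-out}]]$ then gives only an $\Theta(\E[W])$ bound since the opt-out probability is $\Theta(1)$.

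The paper sidesteps this entirely by making the participation decision \emph{collective} rather than per-bidder. Its reserve-welfare mechanism allocates nothing if $\sum_j\max_i b_{ij}<\hat s:=(1-\epsilon)\E[W]$, and otherwise uses the welfare-maximizing allocation with prices $p_i=\hat s-\sum_{\ell\ne i}\sum_{j\in S_\ell}b_{\ell j}$. This is VCG with pivot $\hat s$, hence deterministically truthful, and the realized revenue when allocation occurs is exactly $k\hat s-(k-1)W$. The analysis then needs \emph{only} the concentration of $W$: the ``no-sale'' event has probability at most $\delta$, and $\E[W\mid W\ge\hat s]\le\E[W]/(1-\delta)$ handles the overshoot, giving revenue at least $(1-k\epsilon-k\delta)\E[W]$. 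No statement about any individual $u_i^{\mathrm{VCG}}$ is ever required.
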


At this point, we will focus on how to combine the mechanisms obtained
from the above lemmas to derive the proof of Theorem \ref{thm:mainrestate}. The
proofs of Lemma \ref{lem:fewitem} and Lemma \ref{lem:concentrated} are deferred to
Section \ref{sec:constant} and Section \ref{sec:reservewelfare} respectively.

\paragraph{Proof Outline of Theorem \ref{thm:mainrestate}}
 
Before getting into the technical details, let us first sketch the
road-map of our proof. First of all, we notice that under the MHR
assumption, it is easy to achieve expected revenue that is at least a
constant fraction of the expected social welfare. This follows easily from
previous work (e.g.~see \cite{BBHM}) and we will formally state it as
Lemma \ref{lem:constantfactor}. By this result, we know that we can throw
away items whose contribution to the expected social welfare is tiny
without overhurting the optimal revenue. Next, we proceeds by proving a
structural result saying that we can partition the items into three
groups: a small group of items with large variance, which we shall
handle with the mechanism from Lemma \ref{lem:fewitem}
(Section \ref{sec:constant});\footnote{The philosophy of selecting a
few distinguished items to reduce the size of the problem and solve it
nearly optimally
may looks similar to that of the $k$-lookahead auction (e.g.~see
\cite{Ronen, DFK}), where we choose a few distinguished bidders and
design nearly optimal mechanism for them (based on the bids of the other
bidders). However, there are a few crucial differences. First, in our
approach the small set of items are chosen without knowing the bids
while in the $k$-lookahead auction the set of bidders are chosen based
on the bids. Further, in our approach we also derive good revenue from
the rest of the items while the $k$-lookahead auction never derive
revenue directly from the rest of the bidders. Finally, as a result of
the previous point, our approach admits nearly optimal revenue while the
$k$-lookahead auction only guarantees constant-factor approximation so
far.}
a group of items whose contribution to the social welfare concentrates,
which we will handle with the mechanism from Lemma \ref{lem:concentrated}
(Section \ref{sec:reservewelfare}); and finally a group of items whose total
contributions to the expected social welfare is tiny, which we will
simply ignore (never allocate them to any bidder). This result is
formally stated and proved in Lemma \ref{lem:partition}. At last, in order to
show this approach is a PTAS for the multi-item auction problem, we need
to show that the optimal revenue of the problem is upper bounded by the
optimal revenue when only a subset of items present (the items with
large variance) plus the expected social welfare of the remaining items
(the concentrated group of items). Indeed, we will prove this claim as
Lemma \ref{lem:2}. The complete proof of Theorem \ref{thm:mainrestate} is given in
Appendix \ref{app:decomposition}.

\medskip

Next, let us formally state and prove the technical lemmas mentioned in the proof sketch. The following lemma is folklore from previous work. 

\begin{lemma}[E.g., Corollary 3.7 of \cite{BGGM}] \label{lem:constantfactor}
    For any multi-item auction with MHR bidders, the optimal expected
    revenue is at least a constant fraction of the expected social welfare.
\end{lemma}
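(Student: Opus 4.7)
The plan is to exhibit a simple truthful, individually rational mechanism whose expected revenue is at least a constant fraction of the expected social welfare; since the optimal revenue upper-bounds the revenue of any specific mechanism, this suffices. Because bidders are additive and there is no cross-item feasibility constraint, I would sell each item via a separate single-item auction. The combined mechanism is then DT/IR (it is so item by item), and its revenue is the sum of the per-item revenues. Additivity also gives $\E[\text{SW}] = \sum_{j=1}^n \E[\max_{i\in[m]} v_{ij}]$, so the entire lemma reduces to a per-item claim: for each item $j$, some single-item auction extracts expected revenue at least a universal constant fraction of $\E[\max_i v_{ij}]$.

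For the per-item bound I would apply Theorem~\ref{lem:extremevalue} (with, say, $\epsilon = 1/4$) to the MHR random variables $v_{1j},\dots,v_{mj}$ to obtain an anchoring point $\beta_j$, and then run a second-price auction on item $j$ with reserve price $\beta_j/2$. On the revenue side, the theorem guarantees $\Pr[\max_i v_{ij} \geq \beta_j/2] \geq 1 - 1/\sqrt{e}$, so the expected revenue of this auction is at least $(\beta_j/2)(1 - 1/\sqrt{e}) = \Omega(\beta_j)$. On the welfare side, writing $f_j$ for the density of $\max_i v_{ij}$ and splitting
\[
\E[\max_i v_{ij}] \;=\; \int_0^{2\beta_j \log 4} t\, f_j(t)\, dt \;+\; \int_{2\beta_j \log 4}^{\infty} t\, f_j(t)\, dt,
\]
the first integral is trivially at most $2\beta_j \log 4$, and the second is at most $36 \beta_j \cdot (1/4)\log 4$ by Theorem~\ref{lem:extremevalue}. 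Thus $\E[\max_i v_{ij}] = O(\beta_j)$, and the ratio of the two $O(\beta_j)$ quantities is a universal constant $c > 0$.

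Summing over items $j$ and using additivity yields expected revenue at least $c \cdot \E[\text{SW}]$, as claimed. There is essentially no real obstacle here; the only delicate point is that the revenue lower bound and the welfare upper bound must be expressed in terms of the same scale parameter, which is precisely what the anchoring point of Theorem~\ref{lem:extremevalue} provides. One could alternatively quote Corollary~3.7 of \cite{BGGM} as a black box; the above is a self-contained rephrasing of the same folklore argument, tailored to the MHR extreme-value machinery used elsewhere in this paper.
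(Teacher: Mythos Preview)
Your argument is correct. The paper itself does not prove this lemma at all; it simply labels it as ``folklore from previous work'' and cites Corollary~3.7 of \cite{BGGM} as a black box. Your proposal instead supplies a self-contained proof using the paper's own extreme value machinery (Theorem~\ref{lem:extremevalue}), which is a nice touch since it keeps everything internal. The only cosmetic issue is that Theorem~\ref{lem:extremevalue} is stated for $\epsilon \in (0, \tfrac{1}{4})$, so $\epsilon = \tfrac{1}{4}$ sits on the excluded boundary; picking any fixed $\epsilon < \tfrac{1}{4}$ (your ``say'' already signals this flexibility) resolves it with no change to the argument.
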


Now let us consider the partition lemma. For presentation purpose, we will
only show a weaker version of the partition lemma, under the additional
assumption that the upper and lower bounds of the value range of $X_j =
\max_i v_{ij}$ only differ by at most a constant factor $c$ for every
item $j$. Note that Lemma \ref{lem:extremevalue} implies that at least $1 -
O(\epsilon\log{1\over\epsilon})$ fraction of the contribution to
$\E[X_j]$ comes from a range whose upper and lower bounds
differ by at most an $O(\frac{1}{\epsilon} \log(\frac{1}{\epsilon}))$
factor. It is easy to see that by choosing $c = \Theta(\frac{1}{\epsilon}
\log(\frac{1}{\epsilon}))$ and taking into account the fact that the
contribution outside the range is tiny, we can prove the 
partition lemma without the additional assumption. We omit the details here.

\begin{lemma} \label{lem:partition}
    Suppose $X_1, \dots, X_n$ are $n$ non-negative independent
    random variables, where $[\alpha_j, \beta_j]$ is the range of $X_j$, 
    $1 \le j \le n$, such that $c = \max_j \frac{\beta_j}{\alpha_j}$ is a
    constant. Suppose $\epsilon > 0$ and $\frac{1}{8} > \delta > 0$ are
    small constants. Then, we can partition $X_1, \dots, X_n$ into three
    groups $R$, $S$, and $T$ in polynomial time, such that: 
    \begin{enumerate}
        \setlength{\partopsep}{0pt}
        \setlength{\topsep}{0pt}
        \setlength{\parsep}{0pt} 
        \setlength{\itemsep}{0pt}
        \item The size of $R$ is small: $\abs{R} \le \frac{16
            c^2}{\epsilon^3} \ln \left(\frac{2}{\delta}\right)$. 
        \item The sum in $S$, $\sum_{X_j \in S} X_j$,
            is $(\epsilon,\delta)$-concentrated.
        \item The contribution from group $T$ is tiny: $\sum_{X_j \in T}
            \E[X_j] \le \epsilon \sum_{j=1}^n \E[X_j]$.
    \end{enumerate}
\end{lemma}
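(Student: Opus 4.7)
The plan is to separate items using a single threshold on $\E[X_j]$, placing all ``heavy'' items into $R$ and dispatching the rest to either $S$ or $T$ in a single case split. The threshold is chosen as $\tau = \frac{\epsilon^{3} E}{16 c^{2} \ln(2/\delta)}$, where $E = \sum_j \E[X_j]$. This value is engineered so that Markov-style counting gives the desired bound on $|R|$, while at the same time making Hoeffding's inequality deliver the desired failure probability on $S$.

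Concretely, set $R = \{j : \E[X_j] > \tau\}$. Since every item in $R$ contributes more than $\tau$ to $E$, a direct counting argument yields $|R| < E/\tau = 16 c^{2} \ln(2/\delta)/\epsilon^{3}$, giving condition~(1). Let $J = [n] \setminus R$ and $\mu_J = \sum_{j \in J} \E[X_j]$. If $\mu_J \le \epsilon E$, I take $T = J$ and $S = \emptyset$: condition~(3) is immediate and condition~(2) is trivial because $\sum_{j \in S} X_j = 0$ deterministically. Otherwise, I take $S = J$ and $T = \emptyset$: condition~(3) is vacuous, and I still have to establish concentration of $\sum_{j \in S} X_j$ given $\mu_S = \mu_J > \epsilon E$.

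For the concentration step, I would invoke Hoeffding's inequality. For every $j \in S$, the range satisfies $\beta_j - \alpha_j \le \beta_j \le c \alpha_j \le c \E[X_j]$, and since $\E[X_j] \le \tau$ inside $S$, one gets $(\beta_j - \alpha_j)^{2} \le c^{2} \tau \E[X_j]$. Summing, $\sum_{j \in S}(\beta_j - \alpha_j)^{2} \le c^{2} \tau \mu_S$, and Hoeffding yields
\[
\Pr\Big[\,\Big| \sum_{j \in S} X_j - \mu_S \Big| \ge \epsilon \mu_S \,\Big] \;\le\; 2 \exp\!\Big( - \tfrac{2 \epsilon^{2} \mu_S}{c^{2} \tau} \Big).
\]
Plugging in $\mu_S > \epsilon E$ and the value of $\tau$, the exponent is at least $32 \ln(2/\delta)$, so the failure probability is at most $2(\delta/2)^{32}$, comfortably below $\delta$. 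This delivers condition~(2), and the entire construction is plainly polynomial time.

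The main obstacle is calibrating $\tau$ so both sides of the trade-off hold: making $\tau$ smaller shrinks the Hoeffding exponent on $S$, while making it larger inflates $|R|$. The key inequality $(\beta_j - \alpha_j)^{2} \le c^{2} \tau \E[X_j]$ is what reduces the sum of squared ranges to a linear multiple of $\mu_S$, and it leans essentially on both the bounded-ratio assumption $\beta_j/\alpha_j \le c$ and on the in-group truncation $\E[X_j] \le \tau$; without either, the quadratic-in-$\E[X_j]$ term would survive and the Hoeffding exponent would no longer be governed by $\ln(1/\delta)$ with the claimed constants.
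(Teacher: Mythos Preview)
Your proof is correct and takes a genuinely simpler route than the paper. The paper partitions the items into $\Theta(\log n)$ geometric buckets $B_\ell = \{j : \E[X_j] \in [s/2^\ell, s/2^{\ell-1}]\}$, puts all buckets with small index $\ell \le \ell^*$ into $R$, and then for each remaining bucket decides separately---based on its cardinality---whether to place it in $S$ (if the bucket is large enough for a per-bucket Chernoff--Hoeffding bound to bite) or in $T$; a union bound over the buckets in $S$ then gives the overall concentration. You instead use a single threshold $\tau$ on $\E[X_j]$ together with a single all-or-nothing case split on whether $\mu_J$ exceeds $\epsilon E$, applying Hoeffding once to the whole of $S$ via the key estimate $\sum_{j\in S}(\beta_j-\alpha_j)^2 \le c^2\tau\,\mu_S$. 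Your argument is more elementary---no bucketing, no union bound over $\log n$ layers---and still hits the stated constant $\frac{16c^2}{\epsilon^3}\ln(2/\delta)$ exactly. The paper's per-bucket decision buys a bit more flexibility (some light items can land in $S$ and others in $T$ within the same instance), but that flexibility is not needed for the lemma as stated. One pedantic caveat: in your $S=\emptyset$ branch the sum is deterministically $0$, while the paper's definition of $(\epsilon,\delta)$-concentrated asks for $X$ to lie in the \emph{open} interval $((1-\epsilon)\E[X],(1+\epsilon)\E[X])$, which is empty when $\E[X]=0$; this is a definitional corner case (also present in the paper's own proof if $S$ ends up empty) and is harmless under any sensible reading and for the downstream application.
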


\begin{proof}
    Let $s = \sum_{j=1}^n \E[X_j]$ be the sum of the expectation of
    these random variables. Note that $s$ can be estimated up to a
    constant factor in polynomial time and such estimated value is
    sufficient for our purpose. For the sake of presentation, we will
    assume that we known the value of $s$.
    
    We will first partition the random variables into $\Theta(\log n)$
    buckets $B_1, B_2, \dots, B_{\log n + \log(\frac{2}{\epsilon})}$
    according to their expectations. If
    $\E[X_j] \in [\frac{s}{2^\ell}, \frac{s}{2^{\ell-1}}]$ for $1 \le \ell
    \le \log n + \log(\frac{2}{\epsilon})$, then we put $X_j$ into
    bucket $B_\ell$. If $\E[X_j] \le \frac{\epsilon s}{2n}$, then its
    contribution to the social welfare is negligible and we will put
    $X_j$ into $T$. 

    Briefly speaking, we will proceed as follows. First pick a small
    threshold index $\ell^*$ (the value of $\ell^*$ will be defined
    later); then for each bucket $B_\ell$ such that $\ell \le \ell^*$,
    we put all random variables in $B_\ell$ into $R$; for each bucket
    $B_\ell$ such that $\ell > \ell^*$, we will show that either
    $\sum_{X_j \in B_\ell} X_j$ concentrates with high probability,
    or the contribution of $\sum_{X_j \in B_\ell} \E[X_j]$ is tiny. 
    In the former case, we will put the variables in $B_\ell$ into $S$;
    in the latter case, we will put the variables in $B_\ell$ into $T$. 

    More precisely, for each bucket $B_\ell$ where $\ell > \ell^*$, if
    $|B_\ell| \ge \frac{2c^2}{\epsilon^2} (\ln(\frac{2}{\delta}) + \ell
    - \ell^*)$, we will put these random variables into $S$. Note that
    for every $X_j \in B_\ell$, we have that $\beta_j \le c \alpha_j \le
    c \E[X_j] \le \frac{c \cdot s}{2^{\ell-1}}$ and $\E[X_j] \ge
    \frac{s}{2^\ell}$. By Chernoff-H\"oeffding bound, we get that
    $$\Pr\left[\abs{\sum_{X_j \in B_\ell} X_j - \sum_{X_j \in B_\ell} \E[X_j]} > \epsilon \sum_{X_j \in B_\ell} \E[X_j]\right] \le 2 \exp\left(-\ln \left(\frac{2}{\delta}\right) - \ell + \ell^* \right) = \delta \, \exp(- \ell + \ell^*) \enspace.$$
    
    Now consider all the buckets that we put into $S$. By union
    bound, the probability that the sum of the random variables in any
    of these buckets does not concentrate is at most $\sum_{\ell >
    \ell^*} \delta \exp(-\ell + \ell^*) < \delta$. Thus, we have proved
    that $S$ satisfies the desired property in the lemma.

    If $|B_\ell| < \frac{2c^*}{\epsilon^2} (\ln(\frac{2}{\delta}) + \ell
    - \ell^*)$, we shall put all variables in $B_\ell$ into $T$. In this
    case, we have
    $\sum_{X_j \in B_\ell} \E[X_j] \le \frac{s}{2^{\ell-1}}
    \frac{2c^2}{\epsilon^2} \left(\ln\left(\frac{2}{\delta}\right) +
    \ell - \ell^*\right)$.
    Therefore, the sum of the expected values of the random variables in
    $T$ is at most 
    \begin{eqnarray*}
        \sum_{\ell > \ell^*} \frac{s}{2^{\ell-1}} \frac{2c^2}{\epsilon^2} \left( \ln \left( \frac{2}{\delta} \right) + \ell - \ell^* \right) + \sum_{X_j: \E[X_j] \le \frac{\epsilon s}{2 n}} \frac{\epsilon s}{2 n} & \le & \sum_{\ell > \ell^*} \frac{s}{2^{\ell-1}} \frac{2c^2}{\epsilon^2} \ln \left(\frac{2}{\delta}\right) + \sum_{\ell > \ell^*} \frac{s}{2^{\ell-1}} \frac{2c^2}{\epsilon^2} (\ell - \ell^*) + \frac{\epsilon s}{2} \\
        & = & \frac{s}{2^{\ell^*-1}} \frac{2c^2}{\epsilon^2} \ln \left(\frac{2}{\delta}\right) + \frac{s}{2^{\ell^*-2}} \frac{2c^2}{\epsilon^2} + \frac{\epsilon s}{2} \enspace.
    \end{eqnarray*}

    In order to guarantee that $\sum_{X_j \in T} \E[X_j] \le \epsilon
    s$, it suffices to choose $\ell^*$ such that 
    $$\frac{1}{2^{\ell^*-1}}
    \frac{2c^2}{\epsilon^2} \ln\left(\frac{2}{\delta}\right) +
    \frac{1}{2^{\ell^*-2}} \frac{2c^2}{\epsilon^2} \le \frac{\epsilon}{2} \enspace.$$
    
    Note that $\delta < \frac{1}{2 e^2}$ implies that 
    $$\frac{1}{2^{\ell^*-1}} \frac{2c^2}{\epsilon^2} \ln
    \left(\frac{2}{\delta} \right) + \frac{1}{2^{\ell^*-2}}
    \frac{2c^2}{\epsilon^2} \le \frac{1}{2^{\ell^*-2}}
    \frac{2c^2}{\epsilon^2} \ln \left(\frac{2}{\delta}\right) \enspace.$$
    
    We shall let $\ell^* = \log \left(\frac{16 c^2}{\epsilon^3}
    \ln\left(\frac{2}{\delta}\right)\right)$ and conclude that $T$
    satisfies the claimed property.

    Finally, we note that for any $X_j \in R$, we have $\E[X_j] \ge
    \frac{s}{2^{\ell^*}}$. So by $\sum_{X_j \in R} \E[X_j] \le s$ we get
    that the size of $R$ is at most $2^{\ell^*} = \frac{16
    c^2}{\epsilon^3} \ln \left(\frac{2}{\delta}\right)$.
\end{proof}

At last, we will show that by decomposing the problem into two
sub-problems we do not hurt the optimal revenue by too much. Concretely,
for any $S \subseteq [n]$, we let $\opt^{\textrm{DT}}(S)$,
$\opt^{\textrm{IC}}(S)$, and $\opt^{\textrm{BIC}}(S)$ denote the optimal
revenue by deterministically truthful/IC/BIC mechanisms respectively
when only the items in $S$ are available on the market (value
distributions are the same). We have 

\begin{lemma} \label{lem:2}
    For any $S \subseteq [n]$, we have 
    $$\opt^{\textrm{truthful}} ([n]) \le \opt^{\textrm{truthful}}(S) + \sum_{j \notin S} \E[\max_i v_{ij}] \enspace,$$ 
    where $\textrm{truthful}$ can be instantiated with deterministically truthful (DT), or IC, or BIC.
\end{lemma}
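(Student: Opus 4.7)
The plan is to prove the lemma for each solution concept by the same ``phantom-items'' reduction: from any truthful mechanism $M$ on the full item set $[n]$, I will construct a mechanism $M'$ on the restricted set $S$ that is truthful in the same sense and whose expected revenue is at least $\text{Rev}(M)-\sum_{j\notin S}\E[\max_i v_{ij}]$; applying this to a revenue-optimal $M$ then yields each inequality. Concretely, after collecting reports $\bv^S=(v_i^S)_i$, the mechanism $M'$ internally samples a phantom profile $\tilde{\bv}^{[n]\setminus S}$ by drawing each $\tilde v_{ij}$ ($i\in[k]$, $j\notin S$) independently from $\calF_{ij}$, runs $M$ on the augmented profile $(\bv^S,\tilde{\bv}^{[n]\setminus S})$, allocates to bidder $i$ exactly the items of $S$ that $M$ awards her, and charges her $p_i^M-\tilde v_i^{[n]\setminus S}\cdot x_i^{[n]\setminus S}(M)$---the $M$-payment discounted by the value of the phantom items $M$ would have assigned her.

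The key identity is that bidder $i$'s realized utility in $M'$ equals the utility she would obtain in $M$ at augmented type $(v_i^S,\tilde v_i^{[n]\setminus S})$ when reporting $(b_i^S,\tilde v_i^{[n]\setminus S})$. By independence of valuations across items and bidders, the other bidders' augmented types in this rewriting are jointly distributed as $\calF_{-i}$, so BIC (resp.~IC) of $M$ transfers pointwise in $\tilde v_i^{[n]\setminus S}$, and hence in expectation, to BIC (resp.~IC) of $M'$; the same identity turns ex-post IR of $M$ into ex-post IR of $M'$. For the revenue bound, linearity gives
\[
\text{Rev}(M')=\text{Rev}(M)-\E\!\left[\sum_i\sum_{j\notin S} v_{ij}\,x_{ij}(M)\right]\ge \text{Rev}(M)-\sum_{j\notin S}\E[\max_i v_{ij}],
\]
since each item is given to at most one bidder.

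The one delicate point I anticipate is the DT case, because $M'$ is randomized through its phantom sampling and hence not literally deterministic. My plan there is a simple derandomization: for every fixed realization of $\tilde{\bv}^{[n]\setminus S}$, hard-coding this realization into $M'$ yields a mechanism that is DT and IR on $S$---fixing a subset of coordinates in a DT mechanism preserves DT. Choosing the realization that maximizes the conditional $M'$-revenue then gives a deterministic DT mechanism with revenue at least $\E[\text{Rev}(M')]$, closing the bound. Beyond this averaging step, the entire argument reduces to writing down the right ``phantom payment discount'' and observing that independence across items lines up the distributions, so I do not anticipate further complications.
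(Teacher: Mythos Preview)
Your proposal is correct and is essentially the same proof as the paper's: the phantom-items simulation, the payment discount $p_i^M-\tilde v_i^{[n]\setminus S}\cdot x_i^{[n]\setminus S}$, the utility identity that transfers truthfulness and IR, and the revenue bound via $\sum_i\sum_{j\notin S} v_{ij}x_{ij}\le\sum_{j\notin S}\max_i v_{ij}$ all match. The only cosmetic difference is in the DT case, where the paper phrases your averaging step as ``$M_S$ is uniformly truthful and there is no performance gap between uniformly truthful and DT mechanisms in the Bayesian setting''---exactly your derandomization observation.
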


\begin{proof}
    Suppose $M$ is the truthful (under the instantiated solution
    concept) mechanism that achieves optimal  
    revenue. Let us construct a truthful mechanism $M_S$ for the market
    when only the items in $S$ is presented. The revenue of the
    mechanism shall be at least
    $\opt^{\textrm{truthful}}([n]) - \sum_{j \notin S} \E[\max_i v_{ij}]$: 
    \begin{enumerate}
        \setlength{\partopsep}{0pt}
        \setlength{\topsep}{0pt}
        \setlength{\parsep}{0pt} 
        \setlength{\itemsep}{0pt}
        \item Let bidders submit their bids $\bb_{1, S}, \dots, \bb_{k,
            S}$.
        \item Sample values $\bv_{1, -S} \sim \calF_{1, -S},
            \dots, \bv_{k, -S} \sim \calF_{k, -S}$ for items not in $S$.
        \item Run $M$ on bids $(\bb_{1, S}, \bv_{1, -S}), \dots,
            (\bb_{k, S}, \bv_{k, -S})$. Let $\bS$ and $\bp$ denote the
            allocation and prices.
        \item Give bidder $i$ the items in $S_i \cap S$ and charge
            her $p_i - \sum_{j \in S_i \setminus S} v_{ij}$.
    \end{enumerate}

    First, let us analyze the revenue achieved by $M_S$ assuming the
    bidders bid truthfully: $\bb_{i, S} = \bv_{i, S}$ for $1 \le i \le k$. 
    The revenue by $M_S$ is $\sum_{i=1}^k \E\left[p_i - \sum_{j \in S_i
    \setminus S} v_{ij}\right]$. By linerity of expectation, this can be
    divided into two parts: $\sum_{i=1}^k \E[p_i] - \sum_{i=1}^k
    \E[\sum_{j \in S_j \setminus S} v_{ij}]$. 
    The first part $\sum_{i=1}^k \E[p_i]$ is the
    expected revenue $\opt^{\textrm{truthful}}([n])$ achieved by $M$
    The second part $\sum_{i=1}^k \E \left[ \sum_{j \in S_i \setminus S}
    v_{ij} \right]$ is social welfare from items outside $S$, which is
    upper bounded by the optimal social welfare . Note that the latter
    part is upper bounded by the $\sum_{j \notin S} \E\left[\max_i
    v_{ij}\right]$. Therefore, the revenue by $M_S$ is at least 
    $$\textstyle \opt^{\textrm{truthful}}([n]) - \sum_{j \notin S} 
    \E \left[ \max_i v_{ij} \right] \enspace.$$

    Now let us explain why mechanism $M_S$ is indeed truthful with
    respect to the corresponding solution concepts. Note that
    bidder $i$'s utility is 
    $$\sum_{j \in S_i \cap S} v_{ij} - (p_i - \sum_{j \in S_i \setminus S} v_{ij}) = \sum_{j \in S_i} v_{ij} - p_i \enspace,$$
    which is exactly the utility of a virtual bidder whose values
    are $\bv_i$ and bids $(\bb_{i, S}, \bv_{i, -S})$. So if $M$ is
    IC/BIC, then $M_S$ is also IC/BIC.
    Finally, if $M$ is a deterministically truthful mechanism 
    then $M_S$ is uniformly truthful. We further note that 
    there is no performance gap between optimal uniformly truthful
    mechanisms and optimal deterministically truthful mechanisms in the
    Bayesian setting. So Lemma \ref{lem:2} follows.
\end{proof}

\subsection{Nearly Optimal Mechanism for Constant Number of Items and
bidders} \label{sec:constant}

The mechanisms for constant number of items and constant number of bidders mostly follow directly from previous work. The general approach is to brute-force search with the hope that the search space would be small since both the number of items and the number of bidders are small. However, the strategy spaces for mechanism design problems are typically infinite. Hence, appropriate discretization is needed in order to reduce the size of the search space. We will briefly describe these mechanisms and thus prove Lemma \ref{lem:fewitem} in the Appendix \ref{app:constlp} for self-containness.

\subsection{Nearly Optimal Mechanism When Social Welfare Concentrates} 
\label{sec:reservewelfare}

In this section, we will prove Lemma \ref{lem:concentrated} by demonstrating
how to design nearly optimal mechanisms, when the social
welfare concentrates near its expectation as the bidders' values are drawn
from the corresponding distributions. 

\subsubsection{Single-Bidder Case}

As a warm-up, let us first consider the single-bidder case. This case is
quite straight-forward. We note that a {\em grand-bundle-reserve-price
auction} (e.g.~see Armstrong \cite{Arm99}) shall suffice. More precisely,
the auction will offer the bidder the grand bundle with a
take-it-or-leave-it price 
$$r^* = (1 - \epsilon) \E[\sum_j v_j] \enspace.$$

If the bidder values the grand bundle above $r^*$, she will
take the grand bundle and pay $r^*$; no item is allocated otherwise and
the bidder pays nothing. The proof of the next theorem follows
straightforwardly from the definition of the mechanism and $(\epsilon,
\delta)$-concentrated. So we will omit the tedious details.

\begin{theorem} \label{thm:singlebidder}
    The grand-bundle-reserve-price auction is deterministically
    truthful, individually rational, 
    and its expected revenue is at least $(1 - \epsilon)(1 - \delta)
    \E[\sum_j v_j]$ if the social welfare is $(\epsilon,
    \delta)$-concentrated.
\end{theorem}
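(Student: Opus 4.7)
The plan is to verify the three claims—DT, IR, and the revenue bound—directly from the definition of the grand-bundle-reserve-price mechanism. Since the bidder is presented with a single posted-price offer on the grand bundle, and the price $r^* = (1-\epsilon)\E[\sum_j v_j]$ is a fixed number that does not depend on any report, both truthfulness and individual rationality will be essentially automatic by the taxation principle applied to the degenerate menu consisting of two options (take the grand bundle at price $r^*$, or take nothing at price $0$).

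First I would argue deterministic truthfulness. The bidder's utility under truthful reporting is $\max\{\sum_j v_j - r^*, 0\}$, since the mechanism awards the bundle exactly when $\sum_j b_j \ge r^*$. Any untruthful bid $b_i$ can only change whether the bundle is awarded, not the price or the set of items; so any deviation either leaves the outcome unchanged or switches it to the other (weakly worse) option. Individual rationality follows from the same observation: when the bundle is allocated, the utility is $\sum_j v_j - r^* \ge 0$ by the acceptance condition, and when it is not allocated the utility is exactly $0$.

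Next I would bound the expected revenue. The mechanism collects $r^*$ exactly on the event $\calE = \{\sum_j v_j \ge r^*\}$ and $0$ otherwise, so the expected revenue equals $r^* \cdot \Pr[\calE]$. By $(\epsilon,\delta)$-concentration of the social welfare (which here coincides with $\sum_j v_j$ as the only bidder receives everything in the welfare-optimal allocation), with probability at least $1-\delta$ we have $\sum_j v_j \ge (1-\epsilon)\E[\sum_j v_j] = r^*$, so $\Pr[\calE] \ge 1-\delta$. Combining gives expected revenue at least $(1-\epsilon)(1-\delta)\E[\sum_j v_j]$, as claimed.

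There is no substantial obstacle here; the argument is essentially a two-line verification once the concentration hypothesis is unpacked. The only mildly delicate point is that concentration gives a two-sided bound, while we only need the lower tail—so I would just invoke the weaker one-sided consequence of $(\epsilon,\delta)$-concentration and not worry about sharpening it. No auxiliary lemmas beyond the definition are needed.
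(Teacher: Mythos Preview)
Your proposal is correct and matches the paper's approach exactly: the paper itself omits the proof entirely, remarking that it ``follows straightforwardly from the definition of the mechanism and $(\epsilon,\delta)$-concentrated,'' which is precisely the two-line verification you carry out. There is nothing to add.
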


\subsubsection{Constant Number of Bidders}

Now we show a similar result for multiple bidders. As a natural
first attempt, it might be tempting to think there exists a {\em
reserve-revenue mechanism} with reserve revenue $r^* = (1 - \epsilon)
\E[\sum_j \max_i v_{ij}]$ such that the mechanism offers the grand-bundle
to all the bidders at a reserve price $r^*$ and let the bidders discuss
 and decide whether to accept this offer and how to
share the items and the costs if they decide to accept. 
Of course, the last step in the above procedure is not well-defined.
The hope is that there is a truthful way for the bidders to come to a
consensus of accepting the offer whenever the optimal social welfare is
greater than $r^*$, since in such cases the bidders as a whole has
positive surplus when buying the grand-bundle at the reserve price
$r^*$. 
It is easy to see that this mechanism (if implementable) achieves a
revenue of at least $(1-\epsilon)(1-\delta) \E[\sum_j \max_i v_{ij}]$.
Unfortunately, we show that such mechanisms cannot be implemented in a
truthful and IR manner. We will defer the discussion
of this impossibility result to Appendix \ref{app:reserverevenue}.

\paragraph{Reserve-Welfare Mechanism}

In order to handle the multiple-bidder case, we will propose a novel
mechanism in the VCG-family, which we shall refer to as the {\em
reserve-welfare mechanism}. 
The idea is the following: it might be too aggressive to ask for a certain
reserve revenue whenever the social welfare is above this reserve
revenue; but it suffices to aim for the weaker goal of getting good
revenue only when the social welfare is closed to its expectation
because the social welfare concentrates by our assumption.
Concretely, the {\em reserve-welfare mechanism} is defined in
Figure \ref{fig:reservewelfare}.

\begin{figure*} 
    \centering
    \noindent
    \fbox{
    \begin{minipage}{.9\textwidth}
        \begin{enumerate}
            \item Let $\hat{s} = (1-\epsilon) \E \left[ \sum_j \max_i
                v_{ij} \right]$ be the reserve welfare.
            \item If the optimal social welfare according to the bids,
                $\sum_j \max_i b_{ij}$, is smaller the reserve welfare
                $\hat{s}$, then no item is allocated and the bidders pays
                nothing.
            \item Otherwise, allocate items according to an allocation $\bS
                = (S_1, \dots, S_k)$ that maximizes the social welfare.
            \item Charge bidder $i$ price $p_i = \hat{s} - \sum_{\ell \ne i}
                \sum_{j \in S_\ell} b_{\ell j}$.
        \end{enumerate}
    \end{minipage}}
    \caption{The reserve-welfare mechanism} \label{fig:reservewelfare}
\end{figure*}

Notice when $k=1$, this mechanism indeed becomes the
grand-bundle-reserve-price auction. So the reserve-welfare mechanism can
be viewed as a generalization of the grand-bundle-reserve-price auction.
We shall prove that this mechanism satisfies the desired properties.

\begin{theorem} \label{thm:multibidder}
    The reserve-welfare mechanism is deterministically truthful,
    individually rational, and its expected revenue is at least $(1 - k
    \epsilon- k \delta) \E[\sum_j \max_i v_{ij}]$ if the social
    welfare is $(\epsilon, \delta)$-concentrated for constants
    $\frac{1}{3} > \epsilon > 0$ and $1 > \delta > 0$.
\end{theorem}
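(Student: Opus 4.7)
The plan is to verify the three claims in order: deterministic truthfulness, individual rationality, and the revenue bound. For \emph{deterministic truthfulness}, I will fix bidder $i$'s opponents' bids $b_{-i}$ and analyze $i$'s utility as a function of her own bid. Under a truthful report $b_i = v_i$, the mechanism selects an allocation $S$ maximizing the welfare, which I denote by $W = \sum_{j\in S_i} v_{ij} + \sum_{\ell\ne i}\sum_{j\in S_\ell} b_{\ell j}$; substituting the price $p_i$ into the quasi-linear utility gives utility $W - \hat s$ if $W \ge \hat s$ and $0$ otherwise. For a deviation $b_i$, the mechanism instead picks an allocation $S'$ maximizing the \emph{reported} objective, so $i$'s \emph{true} welfare contribution $\sum_{j\in S'_i} v_{ij} + \sum_{\ell\ne i}\sum_{j\in S'_\ell} b_{\ell j}$ is at most $W$ by optimality of $S$. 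A short case analysis on whether $W$ and its deviated analog exceed $\hat s$ completes the argument: truthful dominates in every case, and the only nontrivial case is the ``tricked into allocation'' one, where the deviation causes the reserve to be met while $W$ lies below $\hat s$, yielding strictly negative utility. Individual rationality then drops out of the same computation, since under truthful reporting the utility is either $0$ or $W - \hat s \ge 0$.

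For the \emph{revenue bound}, I will simply add up the prices. Since each $\sum_{j\in S_\ell} b_{\ell j}$ is subtracted by every bidder other than $\ell$, truthful play yields a total payment of $k\hat s - (k-1)W$ whenever the reserve is cleared, where $W = \sum_j \max_i v_{ij}$ is now the realized optimal social welfare. Writing $\mu = \E[W]$ so that $\hat s = (1-\epsilon)\mu$, the expected revenue decomposes as
$$\E[\mathrm{Rev}] \;=\; k\hat s \cdot \Pr[W \ge \hat s] \;-\; (k-1)\cdot \E\big[W \cdot \mathbf{1}[W \ge \hat s]\big].$$
I will apply $(\epsilon,\delta)$-concentration in the form $\Pr[W \ge \hat s] \ge 1-\delta$, together with the trivial bound $\E[W \cdot \mathbf{1}[W \ge \hat s]] \le \E[W] = \mu$. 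Plugging in and multiplying out gives
$$\E[\mathrm{Rev}] \;\ge\; \big(k(1-\epsilon)(1-\delta) - (k-1)\big)\mu \;\ge\; (1 - k\epsilon - k\delta)\mu,$$
using $k\epsilon\delta \ge 0$ in the last step.

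The \emph{main subtlety} I anticipate lies entirely in the revenue calculation. Individual prices $p_i$ can be very negative when the realized $W$ greatly exceeds $\hat s$, so at first glance the rare upper tail of $W$ could drain the revenue, and we have no structural control on that upper tail beyond what concentration directly provides. The saving observation is that no refined tail bound is actually needed: the positive contribution $k\hat s \cdot \Pr[W \ge \hat s]$ is already of order $k\mu$, so subtracting even the worst-case $(k-1)\E[W] = (k-1)\mu$ still leaves $\mu$ minus lower-order losses. Thus the argument is algebraically tight but probabilistically very mild, requiring only the one-sided statement $\Pr[W \ge \hat s] \ge 1 - \delta$ and nothing about the upper tail of $W$.
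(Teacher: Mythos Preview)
Your proof is correct and follows essentially the same approach as the paper: the truthfulness/IR argument is the same VCG-with-pivot computation (you make the case analysis a bit more explicit), and the revenue calculation sums the prices to $k\hat s - (k-1)W$ on the event $\{W \ge \hat s\}$ and then uses concentration. Your handling of the subtracted term is in fact slightly cleaner than the paper's: you bound $\E[W\mathbf{1}\{W\ge\hat s\}] \le \E[W]$ directly, whereas the paper passes through the conditional expectation $\E[W \mid W \ge \hat s]$ and an extra inequality to reach the same endpoint.
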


Briefly speaking, the proof goes as follows. By our definition of the
payments, each bidder pays almost up to her value on the subset she gets
when if social welfare is near the reserve welfare $\hat{s}$. Further,
the social welfare will be near the reserve welfare $\hat{s}$ almost
for sure by our choice of $\hat{s}$ and that the social welfare is
$(\epsilon, \delta)$-concentrated. The only catch is the prices in the
reserve-welfare mechanism might be negative when the values of the
bidders are very large. We manage to show that the contribution of the
negative prices can be bounded as well. So the expected revenue almost matches the expected social welfare. Below let us present the formal argument.

\begin{proof}{\em (Theorem \ref{thm:multibidder})~}
    If we omit step 2 and always allocate items according to the
    social-welfare-maximizing allocation, then the mechanism falls into
    the VCG family except that we are using the reserve welfare
    $\hat{s}$ as our pivot instead of the Clarke pivot. So this variance
    of the reserve-welfare mechanism is deterministically truthful. Yet
    it is not individually rational. The reason of doing step 2 is
    exactly to fix the individual rationality.

    Formally, for each bidder $i$, suppose her true valuations are
    $\bv_i$ and she bids $\bb_i$. If reporting $\bv_{i}$ the items will
    not be allocated, then she should not lie and get the items
    allocated, since in the former case, her utility is $0$, while in
    the latter case her utility is negative. Now assuming the items are
    allocated, her utility is 
    $$\sum_{j \in S_i} v_{ij} - p_i = \sum_{j \in S_i} v_{ij} + \sum_{\ell \ne i} \sum_{j \in S_k} b_{\ell j} - \hat{s} \enspace.$$
    
    Note that the mechanism chooses the allocation
    that maximizes $\sum_i \sum_{j \in S_i} b_{ij}$. So by reporting her
    value truthfully the bidder maximizes her utility. Thus, the
    mechanism is deterministically truthful. Moreover,
    step 2 guarantees that allocation will be made only if $\sum_i
    \sum_{j \in S_i} b_{ij} \ge \hat{s}$. Therefore, the mechanism is
    individually rational.

    Finally, let us analyze the revenue achieved by the reserve-welfare
    mechanism. We will let 
    $$\textstyle s^* = \E\left[ \sum_j \max_i v_{ij} \right]$$
    denote the optimal expected social welfare, and recall that $\hat{s}
    = (1 - \epsilon) s^*$ is the reserve welfare. Assuming the bidders
    bid truthfully, the revenue is zero if the social welfare is less
    than $\hat{s}$, and is the following otherwise:
    \begin{eqnarray*}
      \sum_i p_i & = & \sum_i \left( \hat{s} - \sum_{\ell \ne i} \sum_{j \in S_\ell} v_{\ell j} \right) \\
      & = & k \hat{s} - \sum_\ell \sum_{j \in S_\ell} (k-1) v_{\ell j} \\
      & = & k \hat{s} - (k-1) \sum_{j \in [n]} \max_\ell v_{\ell j} \enspace.
    \end{eqnarray*}

    By our assumption that the social welfare is $(\epsilon,
    \delta)$-concentrated, the expected revenue is at least
    \begin{equation} \label{eq:app2}
        (1 - \delta) (1 - \epsilon) k s^* - (1 - \delta) (k - 1) \cdot \E \left[ \sum_{j \in [n]} \max_\ell v_{\ell j} \,|\, \sum_{j \in [n]} \max_\ell v_{\ell j} \ge \hat{s} \right] \enspace.
    \end{equation}

    Note that
    \begin{eqnarray}
        s^* & \ge & \Pr \left[ \sum_{j \in [n]} \max_\ell v_{\ell j} \ge \hat{s} \right] \E \left[ \sum_{j \in [n]} \max_\ell v_{\ell j} \,|\, \sum_{j \in [n]} \max_\ell v_{\ell j} \ge \hat{s} \right] \notag \\
        & \ge & (1 - \delta) \E \left[ \sum_j \max_\ell v_{\ell j} \,|\, \sum_j \max_\ell v_{\ell j} \ge \hat{s} \right] \enspace. \label{eq:app3}
    \end{eqnarray}

    Combining \eqref{eq:app2} and \eqref{eq:app3} we get that the expected
    revenue of the reserve-welfare mechanism is at least 
    $$\textstyle (1 - \delta) (1 - \epsilon) k s^* - (k - 1) s^* \ge (1
    - k \delta - k \epsilon) s^* \enspace.$$
    
    This proves the desired revenue guarantee.
\end{proof}

Since the number of bidders is an absolute constant, Theorem \ref{thm:multibidder} implies Lemma \ref{lem:concentrated}.

\section{Many Bidders From the Same Population}

As a natural restriction of the general multi-item auction probelm, we
will consider multi-item auctions with arbitrary number of items and
bidders under the assumption that they are from the same population. Formally, for
every item $j$, the value distributions $\calF_{ij}$ are identical for
every bidder $i$. In this case, we manage to design
nearly-optimal mechanisms based on our results for the few-bidder case.

\bigskip

\noindent\begin{minipage}{\textwidth}
\begin{theorem}[Thm.~\ref{thm:mainiid} elaborated] \label{thm:extension}
    Suppose the bidders are from the same population, then there
    is a PTAS (polynomial in both $n$ and $m$) for finding
    revenue-optimal mechanisms among (all settings require the mechanism
    to be IR): 
    \begin{itemize} 
        \setlength{\partopsep}{0pt}
        \setlength{\topsep}{0pt}
        \setlength{\parsep}{0pt} 
        \setlength{\itemsep}{0pt}
        \item IC/BIC mechanisms with discrete explicit access.
        \item DT mechanisms with discrete explicit access.\footnotemark[1]
        \item BIC mechanisms with continuous oracle access.
        \item DT/IC mechanisms with continuous oracle access.\footnotemark[1]
    \end{itemize}
    \footnotetext{\footnotemark[1]~ Our mechanisms in these cases are only
    $\epsilon$-deterministically truthful and $\epsilon$-IC.}
\end{theorem}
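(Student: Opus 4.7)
}

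The plan is to dichotomize on the number of bidders $m$ relative to a threshold $m^{*}(\epsilon)$ that depends only on $\epsilon$ (and on the MHR assumption) but not on $n$ or $m$. If $m \le m^{*}(\epsilon)$, then $m$ is an absolute constant and Theorem~\ref{thm:mainrestate} already provides a PTAS whose running time is polynomial in $n$ (and trivially in the bounded $m$), in each of the solution-concept/access-model combinations listed. All the mechanisms produced there are IR and satisfy the required truthfulness (possibly $\epsilon$-approximately in the footnoted cases), so nothing more is needed in this regime.

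The interesting regime is $m > m^{*}(\epsilon)$. Here the plan is to auction each item separately, completely ignoring cross-item structure. Concretely, for each item $j$ independently I would run the revenue-optimal single-item mechanism on the $m$ i.i.d.~MHR bids for that item: under discrete explicit access this is Myerson's auction computed directly, and under continuous oracle access it is a second-price auction with a reserve price estimated from $\poly(n,m,1/\epsilon)$ samples from $\calF_{j}$. Because the $m\cdot n$ values are mutually independent and valuations are additive, composing $n$ independent DSIC/IR single-item auctions yields a mechanism that is simultaneously DT, IC, BIC, and IR, and is straightforward to implement in polynomial time. The $\epsilon$-DT/$\epsilon$-IC qualifier only appears through the sampling error when estimating the reserve price in the oracle model, which I would control by standard Chernoff bounds on $\poly(n,m,1/\epsilon)$ samples per item.

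The approximation guarantee reduces to a per-item claim: for any MHR distribution $\calF_j$ and $m \ge m^{*}(\epsilon)$ i.i.d.~bidders, the optimal single-item revenue is at least $(1-\epsilon)\,\E[\max_i v_{ij}]$. This is exactly the kind of consequence drawn in the overview from Theorem~7 of \cite{CD}; the clean route is to invoke the extreme value theorem (Theorem~\ref{lem:extremevalue}) with parameter $\epsilon$ to show that the max bid for item $j$ is concentrated in an interval $[\beta_j/2,\,2\beta_j\log(1/\epsilon)]$ up to an $O(\epsilon\log(1/\epsilon))$ tail in expectation, and then to set the single-item reserve at $\Theta(\beta_j)$, at which price the sale happens with probability $1-O(1/\sqrt{e})^{m}$. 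Choosing $m^{*}(\epsilon)$ large enough to kill this failure probability and absorbing the extreme-tail loss gives per-item revenue at least $(1-\epsilon)\E[\max_i v_{ij}]$. Summing over $j$ and using additivity, the total revenue is at least $(1-\epsilon)\sum_j \E[\max_i v_{ij}]$, which is the expected optimal social welfare and in particular an upper bound on $\opt^{\textrm{truthful}}$ under any IR solution concept.

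The main obstacle is the per-item quantitative bound: getting the right dependence of $m^{*}(\epsilon)$ on $\epsilon$ and the MHR parameters, and handling it uniformly across the (non-identical) $\calF_j$'s. Once this is pinned down, the assembly is routine. Two minor technical items I would also verify en route are (i) that sampling error in the continuous oracle access model degrades the truthfulness only by $\epsilon$ (explaining the footnoted $\epsilon$-DT/$\epsilon$-IC clause in that case), and (ii) that throwing away items whose expected max value is negligible does not affect the optimal revenue by more than an $\epsilon$-fraction, via Lemma~\ref{lem:constantfactor} and Lemma~\ref{lem:2}.
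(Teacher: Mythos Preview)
Your overall architecture matches the paper's proof exactly: dichotomize on $m$ at a threshold $m^{*}(\epsilon)$ depending only on $\epsilon$; for $m \le m^{*}(\epsilon)$ invoke Theorem~\ref{thm:mainrestate}; for $m > m^{*}(\epsilon)$ sell each item separately via second price with reserve and argue per item that revenue is at least $(1-\epsilon)\E[\max_i v_{ij}]$, then sum to bound against optimal social welfare.

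The one genuine gap is your ``clean route'' to the per-item claim via Theorem~\ref{lem:extremevalue}. That result only localizes $\max_i v_{ij}$ to an interval $[\beta_j/2,\,2\beta_j\log(1/\epsilon)]$ whose endpoints differ by a $\Theta(\log(1/\epsilon))$ factor; a single posted price anywhere in that interval can therefore guarantee at best a $\Theta(1/\log(1/\epsilon))$ fraction of $\E[\max_i v_{ij}]$, not $1-\epsilon$. Your $1-O(1/\sqrt{e})^{m}$ sale-probability step conflates two different anchoring points: if $\beta_j$ is computed from a \emph{single} bidder's distribution then it is fixed and cannot track $\E[\max_i v_{ij}]$, which grows unboundedly with $m$ (e.g.\ exponential values); if $\beta_j$ is computed from the $m$-bidder max then the $1-1/\sqrt{e}$ guarantee in Theorem~\ref{lem:extremevalue} is already for that max and does not boost further with $m$. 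Either way the $\log(1/\epsilon)$ ratio is not closed.

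The paper uses instead the strictly stronger i.i.d.-specific extreme value result, stated as Lemma~\ref{lem:extremevalue2} (this is what Theorem~7 of \cite{CD} actually gives): once $m \ge (12/\epsilon)^{12/\epsilon}$ there is an efficiently computable reserve $r^{*}_{j}$ with $\Pr[\max_i v_{ij}\ge r^{*}_{j}]\cdot r^{*}_{j}\ge (1-\epsilon)\E[\max_i v_{ij}]$; equivalently, the max concentrates in a $(1+\epsilon)$-ratio interval rather than a $\log(1/\epsilon)$-ratio one. With that lemma in place of your Theorem~\ref{lem:extremevalue} step, the rest of your assembly goes through verbatim. Your side remarks about sampling error in the oracle model and about negligible items are reasonable hygiene but the paper does not need them in this proof: the reserve $r^{*}_{j}$ comes directly from the distribution, and the comparison is against global welfare with no item pruning.
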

\end{minipage}

\bigskip

We will need the following lemma for i.i.d.~MHR distributions following Theorem 7 in 
Cai and Daskalakis \cite{CD} and the proof therein.

\begin{lemma}
    \label{lem:extremevalue2}
    Suppose $v_1, \dots, v_k$ are i.i.d.~according to a MHR distribution, and $k \ge (12/\epsilon)^{12/\epsilon}$. Then there is a threshold $r^*$ such that 
    $$\Pr \left[ \max_i v_i \ge r^* \right] \cdot r^* \ge (1 - \epsilon) \E \left[ \max_i v_i \right] \enspace.$$
    Moreover, one can efficiently find such a threshold $r^*$ in polynomial time.
\end{lemma}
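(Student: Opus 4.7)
The plan is to pick $r^*$ as an appropriate quantile of the underlying MHR distribution $F$ and then show, using two key consequences of the MHR property, that the posted-price revenue $r^* \cdot \Pr[\max_i v_i \ge r^*]$ captures almost all of $\E[\max_i v_i]$. Concretely, I would set $r^* = F^{-1}(1-\alpha)$ with $\alpha = \ln(1/\epsilon)/k$, so that by a standard inequality $\Pr[\max_i v_i \ge r^*] = 1 - (1-\alpha)^k \ge 1 - e^{-k\alpha} = 1 - \epsilon$. This immediately yields the lower bound $r^* \cdot \Pr[\max_i v_i \ge r^*] \ge (1-\epsilon) r^*$, so everything reduces to showing $\E[\max_i v_i] \le (1 + O(\epsilon)) r^*$.

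To upper-bound $\E[\max_i v_i] = \int_0^\infty \Pr[\max_i v_i \ge t]\, dt$, I would split the integral at $r^*$. The piece on $[0, r^*]$ is trivially at most $r^*$. For the tail piece on $[r^*, \infty)$, use $1 - F(t)^k \le k(1 - F(t))$ to reduce to bounding $k \int_{r^*}^\infty (1-F(t))\, dt$. Here I invoke the first MHR fact: since the hazard rate $h$ is non-decreasing, $1 - F(t) = \alpha \exp\bigl(-\int_{r^*}^t h(s)\, ds\bigr) \le \alpha\, e^{-(t-r^*) h(r^*)}$, giving $\int_{r^*}^\infty (1-F(t))\, dt \le \alpha / h(r^*)$. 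The second MHR fact is $r^* h(r^*) \ge \int_0^{r^*} h(s)\, ds = \ln(1/\alpha)$, again because $h$ is non-decreasing. Combining, the tail contribution is at most $r^* \cdot k\alpha / \ln(1/\alpha)$, and therefore
\[
\E[\max_i v_i] \le r^*\left(1 + \frac{k\alpha}{\ln(1/\alpha)}\right) = r^*\left(1 + \frac{\ln(1/\epsilon)}{\ln k - \ln\ln(1/\epsilon)}\right).
\]
Plugging in $k \ge (12/\epsilon)^{12/\epsilon}$ forces the second term in the parenthesis to be $O(\epsilon)$, and combining with the lower bound from the first paragraph yields $r^* \cdot \Pr[\max_i v_i \ge r^*] \ge (1-\epsilon) \E[\max_i v_i]$ after a minor rescaling of $\epsilon$.

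For the efficient computation claim, I would either do binary search on $F^{-1}$ in the discrete explicit access model, or estimate the $(1-\alpha)$-quantile by sampling $\mathrm{poly}(k, 1/\epsilon)$ values from the oracle and taking an empirical quantile; a Chernoff/DKW-type bound shows that the empirical quantile lands in a region where $\Pr[\max_i v_i \ge \hat r^*]$ and $\hat r^*$ each differ from the ideal values by at most an $\epsilon$ multiplicative factor, which is absorbed into the final slack. The main technical obstacle is not in the MHR manipulations, which are short, but in ensuring that the threshold choice is robust: I need $\alpha$ small enough that $(1-\alpha)^k$ is tiny (to push $\Pr[\max_i v_i \ge r^*]$ close to $1$) while simultaneously keeping $\ln(1/\alpha)$ large enough (so that the tail bound $k\alpha/\ln(1/\alpha)$ is $o(\epsilon)$). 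The quantitative tradeoff between these two requirements is exactly what forces the doubly-exponential lower bound $k \ge (12/\epsilon)^{12/\epsilon}$; this is the only delicate calculation in the proof, and it is what makes the choice $\alpha = \ln(1/\epsilon)/k$ essentially tight.
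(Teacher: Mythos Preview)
Your argument is correct: the quantile choice $r^* = F^{-1}(1-\alpha)$ with $\alpha = \ln(1/\epsilon)/k$, together with the two standard MHR inequalities $1-F(t)\le \alpha\, e^{-(t-r^*)h(r^*)}$ and $r^* h(r^*)\ge \ln(1/\alpha)$, cleanly yields $\E[\max_i v_i]\le r^*\bigl(1+\ln(1/\epsilon)/(\ln k-\ln\ln(1/\epsilon))\bigr)$, which is $(1+O(\epsilon))r^*$ once $k\ge (C/\epsilon)^{C/\epsilon}$. Your route, however, is genuinely different from the paper's. The paper does not prove this lemma directly; it invokes Theorem~7 of Cai--Daskalakis~\cite{CD}, a strengthened extreme value theorem for i.i.d.\ MHR variables stating that when $k$ is large enough the mass of $\E[\max_i v_i]$ is concentrated in an interval of multiplicative width $1+\epsilon$, and then reads off $r^*$ as the lower endpoint of that interval. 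Your proof is more elementary and fully self-contained: it avoids the extreme value machinery entirely and works straight from the hazard-rate monotonicity, at the cost of reproving (in effect) the tail piece of that theorem by hand. One caveat is that your constants need not match the paper's $12$ exactly---your rescaling of $\epsilon$ may shift the threshold to $(C/\epsilon)^{C/\epsilon}$ for some other absolute constant $C$---but this is immaterial for the PTAS application. The computational claim is also fine: in the oracle model you need roughly $\Theta(1/\alpha^2)=\Theta(k^2)$ samples to pin down the $(1-\alpha)$-quantile to within a constant-factor error in $\alpha$ via DKW, which is polynomial in $k$ and preserves both halves of the bound up to $O(\epsilon)$ slack.
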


Roughly speaking, Cai and Daskalakis managed to improve their extreme
value theorem when the bidders are i.i.d.~so that consider the
expectation of the random variable $\max_i v_i$, we only need to focus
on the contribution from a small interval whose upper and lower bounds
only differ by a $(1 + \epsilon)$ factor. As a simple corollary of this
stronger extreme value theorem, we have the above lemma.

Equipped with this lemma, we are now ready to solve the case of
arbitrary number of i.i.d.~bidders.

\begin{proof}{\em (Theorem \ref{thm:extension})~}
    Note that for each item $j$, the bidders' valuations for this item
    $v_{1j}, \dots, v_{kj}$ are i.i.d.~random variable according to a
    MHR distribution. Therefore, if the number of bidders $k$ is greater
    than $(12 / \epsilon)^{12 / \epsilon}$, then by
    Lemma \ref{lem:extremevalue2}, we can find in polynomial time a threshold
    $r^*_j$ for each item $j$ such that 
    $$\textstyle \Pr[\max_i v_{ij} \ge r^*_j] \cdot r^*_j \ge (1 -
    \epsilon) \E[\max_i v_{ij}] \enspace.$$

    Therefore, if we run the second price auction with reserve prices $r^*_j$ for
    each item $j$, then the expected revenue is at least 
    $$\textstyle \sum_j \Pr[\max_i v_{ij} \ge r^*_j] \cdot r^*_j \ge (1
    - \epsilon) \sum_j \E[\max_i v_{ij}] \enspace.$$

    Note that the right-hand-side of the above inequality is the optimal
    expected social welfare and therefore is an upper bound on the
    optimal revenue. So in the case when there are at least $(12 /
    \epsilon)^{12 / \epsilon}$ bidders, a simple
    reserve-price auction suffices to obtain a $(1 - \epsilon)$ fraction
    of the optimal revenue. Note that this mechanism is deterministic
    truthful and thus satisfies all our definitions of truthfulness.

    So it suffices to solve the case when the number of bidders are
    smaller than $(12 / \epsilon)^{12/ \epsilon}$. But this falls into
    the case of constant number of bidders for any constant $\epsilon >
    0$. So we could use the mechanism in Theorem \ref{thm:main} to solve the
    few-bidder case. In sum, we have proved the theorem.
\end{proof}

\bibliographystyle{plain}
\bibliography{multiitem}

\appendix

\section{Omitted Proofs in Section \ref{sec:decomposition}}
\label{app:decomposition}

\begin{proof}{\em (Theorem \ref{thm:mainrestate})~}
First, we will use Lemma \ref{lem:extremevalue} to truncate the random
variables $X_{j}=\max_{i} v_{ij}$ and get $\hat{X}_{j}$, so that every
$\hat{X}_{j}$ lies in an interval where upper bound and lower bound are only
$\left({1\over\epsilon}\log({1\over\epsilon})\right)$ factor away and
$\E[\hat{X}_{i}] \ge \left(1-O(\epsilon \log({1\over\epsilon}))\right)\E[X_{i}]$. 

By Lemma \ref{lem:partition}, we partition $[n]$ into three sets $R$, $S$ and
$T$ using the same $\epsilon$. Let $S_{1}$, $S_{2}$ and $S_{3}$ be the
sets of items whose max value are in $R$, $S$ and $T$, respectively.
Then, the size of $S_{1}$ is a constant that only depends on $\epsilon$
and the following its true. 
\begin{eqnarray}
    \sum_{j \in S_{3}} \E[\max_{i} v_{ij}] & \le & {\sum_{\hat{X}_i \in
    T}\E[\hat{X}_i]\over 1-O\left(\epsilon
    \log({1\over\epsilon})\right)} \notag \\
    & \le & {\epsilon \sum_{i=1}^n
    \E[X_i]\over 1-O\left(\epsilon \log({1\over\epsilon})\right)} \notag \\
    & \le &
    {O(\epsilon) \opt([n])\over 1-O\left(\epsilon
    \log({1\over\epsilon})\right)} \enspace, \label{eq:app1}
\end{eqnarray}
where the last inequality follows from Lemma \ref{lem:constantfactor}.
 
Let $M_{1}$ and $M_{2}$ be the $(1 - \epsilon)$-approximate mechanisms
from Lemma \ref{lem:fewitem} and Lemma \ref{lem:concentrated} respectively.
Consider the following mechanism $M$ for $[n]$:
\begin{enumerate}
\item Let the bidders submit their bids $\bb_{1,[n]}, \dots, \bb_{k,[n]}$.
\item Run $M_{1}$ on bids $\bb_{1,S_{1}},\bb_{2,S_{1}},\ldots,\bb_{k,S_{1}}$ and $M_{2}$ on bids $\bb_{1,S_{2}},\bb_{2,S_{2}},\ldots,\bb_{k,S_{2}}$.
\item Let $\bS'$, $\bp'$ and $\bS''$, $\bp''$ be the corresponding allocation and prices for $M_{1}$ and $M_{2}$, give items in $\bS'_{i}\cup\bS''_{i}$ to bidder $i$, and charge him $p'_{i}+p''_{i}$.
\end{enumerate}

Let $\calR(N)$ be the revenue for mechanism $N$, then $\calR(M)=\calR(M_{1})+\calR(M_{2})$. By Lemma \ref{lem:2}, we know 
$$\opt([n])\leq \opt(S_{1}) + \max_{j \in S_{2}} \E[\max_{i} v_{ij}] +
\max_{j \in S_{3}} \E[\max_{i} v_{ij}] \enspace.$$

First of all, the contribution of the last term is small according to
\eqref{eq:app1}. So it suffices to obtain revenue close to 
$\opt(S_{1})+\max_{j\in S_{2}} \E\left[\max_{i} v_{ij}\right]$.
Further, we know that $\calR(M_{1})\geq (1-\epsilon)\opt(S_{1})$ and
$\calR(M_{2})\geq (1-\epsilon)\sum_{j \notin S_{2}} \E[\max_i v_{ij}]$.
Therefore, we have $\calR(M)\geq (1-O(\epsilon))\opt([n])$.
    
Finally, the truthfulness (with respect to the corresponding solution
concept) of $M$ follows straightforwardly from the truthfulness of $M_1$
and $M_2$. So we have proved the theorem.
\end{proof}

\section{Nearly Optimal Mechanism for Constant Number of Items and
Bidders}\label{app:constlp}

\subsection{Discrete Explicit Access Model}

In this setting, the problem of optimal mechanism design for revenue
among IC and IR mechanisms or among BIC and IR mechanisms can be written
as polynomial-size linear programs (each bidder might have many different values for an item). Therefore, we can efficiently find
the optimal mechanism in these two settings. Since the LPs we used are very
standard (e.g.~see \cite{BBHM}), we will defer the discussion of these
LPs to Appendix \ref{app:lp}.

For the problem of optimal mechanism design among deterministically
truthful mechanisms, however, we need to solve the integer program
version of the LP of optimal IC mechanisms. In order to do so, we need
to reduce the size of the integer program from polynomial to constant.
We will take the standard approach of rounding down each bidder's value to
the nearest multiple of $\epsilon$.
As a result, for each bidder-item pair we only need to consider a
constant number of possible valuations. Recall there are only a constant
number of items and bidders, we can solve the constant-size integer
program for this coarsened support set efficiently. As a result of the
coarsening, however, we only get $\epsilon$-deterministically truthful
instead of perfect truthfulness.

\subsection{Continuous Oracle Access Model}

\subsubsection{DT Mechanism for a Single Bidder}

By the taxation principle, any deterministically truthful mechanism can
be interpreted as a bundle-pricing mechanism: the bidder is given a menu
of bundles of items such that the prices of the bundles are independent
on the reported values; moreover, the bidder always gets one of the
utility-maximizing bundles. In other words, it suffices to find the
nearly optimal bundle prices. In order to do so, we first show that in
order to obtain nearly optimal revenue it suffice to consider a finite
number of prices for each bundle via a standard price discretization lemma
attributed to Nisan (e.g.~see \cite{CHK}). Then, we can search over all
possible bundle-pricings within the discretized price set and choose the
optimal one. Since there are only constant number of items and thus
constant number of bundles, such brute-force search can be done
efficiently. For completeness we include a formal statement and the
proof of the price discretization lemma in Appendix \ref{app:pricedisc}.

\subsubsection{IC Mechanism for a Single Bidder} 

In this case, our starting
point is again the taxation principle. Any IC mechanism can be
interpreted as a lottery-pricing mechanism: the bidder is given a (not
necessarily finite) menu of lotteries, each of which is represented by a
vector of the probabilities of getting each item, such that the prices
of the lotteries are independent on the reported values; moreover, the
bidder always gets the utility maximizing lottery. By the same price
discretization lemma, we only need to consider a finite number of prices
for each lottery. However, there is an infinite number of possible
lotteries. We settle this problem by showing the lottery space can be
discretized as well. Concretely, we prove that in order to obtain $1 -
O(\epsilon)$ of the optimal revenue, it suffices to consider lotteries
in which the probabilities of getting each item are powers of $(1 +
\epsilon^2)$ and are greater than $\epsilon^2$. As a result, we can
combine the lottery discretization lemma
and the pricing discretization lemma to show that in order to get nearly
optimal lottery pricing it suffices to search over constant number of
lottery-pricing mechanisms and choose the best one. The proof of the
lottery discretization lemma is deferred to Appendix \ref{app:lotterydisc}.

\subsubsection{DT/IC/BIC Mechanisms for Multiple
Bidders} 

In order to solve the problem for multiple bidders, we use a
reduction to the discrete case: discretize the prior distributions by
rounding each sampled value to the closest powers of $(1 + \epsilon)$ and 
truncate values that are too large or too small according to the extreme
value theorem in \cite{CD}. We then find the nearly optimal mechanism
for the coarsened problem via the integer programing/linear programming
approach for the discrete case. Finally, we will round the bids of the
bidders to the closest powers of $(1 + \epsilon)$, run the above
mechanism on the coarsened bids, and use the allocation and prices
chosen by the mechanism. As a result of the rounding, the mechanisms we
obtain are only $\epsilon$-truthful with respect to the corresponding
solution concepts. Nonetheless, in the BIC case, we can use the
technique recently developed by Daskalakis and Weinberg \cite{DW} to
convert our $\epsilon$-BIC mechanism into a BIC one with only a small
additional loss in the expected revenue.
 
\section{Linear Programs for Multi-Item Auctions} \label{app:lp}

It has long been known that if the support set is finite, then the
problem of designing truthful (IC/BIC) mechanism that achieves optimal
revenue can be characterized by a linear problem. For completeness, we
will describe the standard linear programs for the multi-item auction. 

For any type profile $\bv$, any bidder
$i$, and any item $j$, we let $x(\bv)_{ij}$ denote the probability that
bidder $i$ gets item $j$ when the valuations are $\bv$, and let
$p(\bv)_i$ denote the expected payment of bidder $i$. The problem of
optimal multi-item auction among IC mechanisms has the following exact
LP characterization:
\begin{align*}
     \textrm{Maximize} & & \sum_{\bv} ~ \Pr[\bv] & \sum_{i=1}^k p(\bv)_i \quad \textrm{s.t.} \\
    \forall j, \bv : \quad & & \sum_{i=1}^k x(\bv)_{ij} & \le 1 \\
    \forall i, \bv, v'_i: \quad & & \sum_{j=1}^n x(\bv)_{ij} v_{ij} - p(\bv)_i & \ge \sum_{j=1}^n x(v'_i, v_{-i})_{ij} v_{ij} - p(v'_i, v_{-i})_i  \\
    \forall i, \bv : \quad & & \sum_{j=1}^n x(\bv)_{ij} v_{ij} - p(\bv)_i & \ge 0
\end{align*}

The LP characterization of the problem of optimal multi-item auction
among BIC mechanisms is almost the same, except for replacing the IC
constraints with the following BIC constraints for all $i$, $\bv$, and
$v_i'$:
$$\sum_{v_{-i}} \Pr[v_{-i}] \left( \sum_{j=1}^n x(v_i, v_{-i})_{ij} v_{ij} - p(v_i, v_{-i})_i \right) \ge \sum_{v_{-i}} \Pr[v_{-i}] \left( \sum_{j=1}^n x(v'_i, v_{-i})_{ij} v_{ij} - p(v'_i, v_{-i})_i \right) \enspace.$$

\section{Price Discretization Lemma} \label{app:pricedisc}

The following price discretization lemma is attributed to
Nisan (e.g., \cite{BBHM, CHK}):

\begin{lemma} \label{lem:pricedisc}
    For $\epsilon \in (0, 1)$, let $\bp$ and $\bp'$ be two bundle
    pricing schemes such that for any bundle $i$, $p_i \in [1-\epsilon,
    1 - \epsilon + \epsilon^2] p'_i$. Suppose the bidder buys bundle $j$
    when $\bp$ are the prices and buys bundle $\ell$ when $\bp'$ are the
    prices, then $p_j \ge (1 - 2\epsilon) p'_\ell$.
\end{lemma}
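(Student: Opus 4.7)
The plan is to extract the only information we have about the bidder's behavior, namely the two utility-maximization inequalities that her choices reveal, and then eliminate her (unknown) valuations by combining them.

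Let $v_i$ denote the bidder's value for bundle $i$. Since she buys bundle $j$ at prices $\bp$, she weakly prefers $j$ to $\ell$, so $v_j - p_j \ge v_\ell - p_\ell$. Since she buys bundle $\ell$ at prices $\bp'$, she weakly prefers $\ell$ to $j$, so $v_\ell - p'_\ell \ge v_j - p'_j$. Adding these two inequalities cancels the $v_i$'s and gives the key inequality $p'_j - p'_\ell \ge p_j - p_\ell$.

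Now I would plug in the hypothesized multiplicative relationship between $\bp$ and $\bp'$. The upper bound on $p_\ell$ and the lower bound on $p_j$ yield $p_j - p_\ell \ge (1-\epsilon) p'_j - (1-\epsilon+\epsilon^2) p'_\ell$. Combining this with $p'_j - p'_\ell \ge p_j - p_\ell$ and collecting $p'_j$ on one side and $p'_\ell$ on the other gives $\epsilon p'_j \ge (\epsilon - \epsilon^2) p'_\ell$, i.e., $p'_j \ge (1-\epsilon) p'_\ell$. One more application of $p_j \ge (1-\epsilon) p'_j$ yields $p_j \ge (1-\epsilon)^2 p'_\ell \ge (1-2\epsilon) p'_\ell$, as desired.

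The only mild issue I foresee is the corner case where the bidder prefers the outside option (buying nothing) under one of the price schemes. This is handled uniformly by treating the empty bundle as an additional bundle with value $0$ and price $0$ under both $\bp$ and $\bp'$: the hypothesis $p_i \in [1-\epsilon, 1-\epsilon+\epsilon^2]\, p'_i$ is vacuously satisfied, and if $\ell$ is the empty bundle then $p'_\ell = 0$ and the conclusion is immediate, while if only $j$ is the empty bundle the revealed-preference argument above goes through unchanged. No deep inequality is needed beyond the two revealed-preference comparisons; this is essentially a one-line calculation once those are combined, so there is no substantive obstacle.
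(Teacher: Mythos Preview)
Your argument is correct and is essentially identical to the paper's own proof: both derive the same revealed-preference inequality $p'_j - p_j \ge p'_\ell - p_\ell$ (equivalently your $p'_j - p'_\ell \ge p_j - p_\ell$) from the two utility-maximization conditions, then substitute the multiplicative bounds on $p_i/p'_i$ to obtain $p'_j \ge (1-\epsilon)p'_\ell$ and hence $p_j \ge (1-\epsilon)^2 p'_\ell \ge (1-2\epsilon)p'_\ell$. Your remark on the empty-bundle corner case is an extra (and harmless) observation not present in the paper.
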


\begin{proof}
    By our assumption, we have $v_j - p_j \ge v_\ell - p_\ell$ and
    $v_\ell - p'_\ell \ge v_j - p'_j$. Summing up the two inequalities
    and cancelling the common terms, we have $p'_j - p_j \ge p'_\ell -
    p_\ell$. Note that by our assumption $p'_j - p_j \le p'_j - (1 -
    \epsilon) p'_j = \epsilon p'_j$, and $p'_\ell - p_\ell \ge p'_\ell -
    (1 - \epsilon + \epsilon^2) p'_\ell = (\epsilon - \epsilon^2)
    p'_\ell$. So we have $p'_j \ge (1 - \epsilon) p'_\ell$. Finally,
    $p_j \ge (1 - \epsilon) p'_j$. So $p_j \ge (1 - \epsilon)^2 p'_\ell
    \ge (1 - 2 \epsilon) p'_\ell$.
\end{proof}

By Lemma \ref{lem:pricedisc} we know that it suffices to consider prices that
are powers of $(1+\epsilon^2)$ in order to get $(1 - 2\epsilon)$ of the
optimal revenue. Of course, we still have infinite number of prices to
consider. In order to settle this problem, we will use the extreme value
theorem in \cite{CD} to conclude that for each bundle it suffices to
consider prices that are in a range whose upper and lower bounds differ
by at most an $O(\frac{1}{\epsilon} \log (\frac{1}{\epsilon}))$ factor
(this range may be different for different bundles). Therefore, we only
need to consider
$O\left( \log_{1+\epsilon} \left( \frac{1}{\epsilon}
\log(\frac{1}{\epsilon}) \right) \right) = O(\frac{1}{\epsilon}
\log(\frac{1}{\epsilon}))$ number of prices per bundle.  

\section{Lottery Discretization Lemma} \label{app:lotterydisc}

The following lottery discretization lemma is inspired by the idea in
the price discretization lemma. First, let us define some notations. We
will use a $n$-dimensional vector $\bq = (q_1, \dots, q_n)$ to denote a
lottery where $q_j$ is the probability of getting item $j$. A lottery
menu is a collection (may or may not be finite) of lottery-price pairs: 
$\{(\bq_1, p_1), (\bq_2, p_2), \dots \}$.

\begin{lemma} \label{lem:lotterydisc}
    Suppose $\epsilon \in (0, 1)$. Consider the optimal
    lottery menu $L$ and another lottery menu $L'$ obtained by
    rounding up probabilities of each lottery $(\bq_i, p_i) \in L$
    into $(\bq'_i, p_i)$ such that for all $j \in [n]$, $q'_{ij} \in
    [1+\epsilon-\epsilon^2, 1+\epsilon] q_{ij}$. Then, the expected
    revenue from menu $L'$ is at least a $(1 - O(\epsilon))$ fraction of
    that from menu $L$.
\end{lemma}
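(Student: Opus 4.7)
The plan is to establish a pointwise revenue comparison between the lottery the bidder picks in $L$ and the one she picks in $L'$, with an additive loss proportional to her realized welfare, and then convert that additive loss into a $(1-O(\epsilon))$ multiplicative loss by appealing to Lemma~\ref{lem:constantfactor}. The latter step is where the MHR assumption enters.

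First, for a fixed type $v$, let $i^{*}=i^{*}(v)$ and $j^{*}=j^{*}(v)$ be the lotteries selected by the bidder under $L$ and $L'$, respectively. Since both menus share the same prices $\{p_i\}$, utility maximization in each menu gives
$$v\cdot\bq_{i^{*}}-p_{i^{*}} \;\ge\; v\cdot\bq_{j^{*}}-p_{j^{*}}, \qquad v\cdot\bq'_{j^{*}}-p_{j^{*}} \;\ge\; v\cdot\bq'_{i^{*}}-p_{i^{*}}.$$
Summing cancels the prices and leaves $v\cdot(\bq'_{j^{*}}-\bq_{j^{*}})\ge v\cdot(\bq'_{i^{*}}-\bq_{i^{*}})$. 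Bounding the left-hand side from above and the right-hand side from below using $q'_{kj}\in[1+\epsilon-\epsilon^{2},1+\epsilon]\,q_{kj}$ term by term gives $\epsilon\,v\cdot\bq_{j^{*}}\ge(\epsilon-\epsilon^{2})\,v\cdot\bq_{i^{*}}$, that is, $v\cdot\bq_{j^{*}}\ge(1-\epsilon)v\cdot\bq_{i^{*}}$. Feeding this back into the first inequality yields the pointwise revenue bound
$$p_{j^{*}(v)}\;\ge\;p_{i^{*}(v)}\;-\;\epsilon\,v\cdot\bq_{i^{*}(v)}.$$

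Next I would take expectations over $v$ to obtain $\E[\mathrm{rev}(L')]\ge\E[\mathrm{rev}(L)]-\epsilon\,\E[v\cdot\bq_{i^{*}(v)}]$. Here $\E[v\cdot\bq_{i^{*}(v)}]$ is the expected (single-bidder) welfare extracted by $L$, which is at most the optimal expected welfare $\E[\sum_j v_j]$. Lemma~\ref{lem:constantfactor}, in the single-bidder MHR setting, says that the optimal expected revenue is at least a constant fraction of this welfare; and since $L$ is the optimal IC menu, that optimal revenue equals $\E[\mathrm{rev}(L)]$. Hence $\E[v\cdot\bq_{i^{*}(v)}]=O(\E[\mathrm{rev}(L)])$, and substituting back delivers $\E[\mathrm{rev}(L')]\ge(1-O(\epsilon))\E[\mathrm{rev}(L)]$ as desired.

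The main (and essentially only) obstacle is this final conversion from the additive slack $\epsilon\,v\cdot\bq_{i^{*}(v)}$ to a multiplicative $O(\epsilon)$ factor: in general the bidder's realized surplus can be arbitrarily large compared with the revenue she pays, so the argument would stall without some welfare-to-revenue comparison. MHR, via Lemma~\ref{lem:constantfactor}, supplies exactly that comparison, so once the two-line lottery-swap inequality in the second paragraph is in hand, the remaining work is just a one-line application of it.
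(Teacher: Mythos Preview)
Your proof is correct and follows essentially the same approach as the paper's: the same two utility-maximization inequalities are summed to compare $v\cdot\bq_{j^*}$ with $v\cdot\bq_{i^*}$, yielding the pointwise bound $p_{j^*}\ge p_{i^*}-\epsilon\,v\cdot\bq_{i^*}$, and then the additive welfare slack is converted to a multiplicative $O(\epsilon)$ loss via the constant-factor welfare-to-revenue comparison (Lemma~\ref{lem:constantfactor}). The paper's proof is identical in structure and detail, differing only in notation.
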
 

\begin{proof}
    Suppose the type profile is $\bv$. Further, let us assume the
    bidder buys lottery $(\bq_j, p_j)$ when $L$ is presented and
    $(\bq'_\ell, p_\ell)$ when $L'$ is presented. We have 
    \begin{eqnarray} 
        \label{eq:1}
        \bv \cdot \bq_j - p_j & \ge & \bv \cdot \bq_\ell - p_\ell \\    
        \bv \cdot \bq'_\ell - p_\ell & \ge & \bv \cdot \bq'_j - p_j
        \notag
    \end{eqnarray}
    
    By summing up these
    two inequalities and cancelling the common terms, we have 
    $$\bv \cdot \bq'_\ell - \bv \cdot \bq_\ell \ge \bv \cdot \bq'_j - \bv \cdot \bq_j \enspace.$$
    
    By our assumption, we further have 
    $$\bv \cdot \bq'_j - \bv \cdot \bq_j \ge \bv \cdot (1 + \epsilon - \epsilon^2) \bq_j - \bv \cdot \bq_j = (\epsilon - \epsilon^2) \bv \cdot \bq_j \enspace,$$ 
    and
    $$\bv \cdot \bq'_\ell - \bv \cdot \bq_\ell \le \bv \cdot (1 + \epsilon) \bq_\ell - \bv \cdot \bq_\ell = \epsilon \bv \cdot \bq_\ell \enspace.$$
    
    Therefore, we have $\bv \cdot \bq_\ell \ge (1 - \epsilon) \bv \cdot
    \bq_j$. By this inequality and \eqref{eq:1}, we have 
    $$p_\ell \ge p_j + \bv \cdot \bq_\ell - \bv \cdot \bq_j \ge p_j - \epsilon \bv \cdot \bq_j \enspace.$$

    Hence, if we compare the expected revenue from $L'$, $\E[p_\ell]$,
    and the expected revenue of $L$, $\E[p_j]$, then the former is worse
    than the latter by no more than an $\epsilon$ fraction of the social
    welfare by $L$. We further note that the optimal social welfare and
    the optimal revenue differ by at most a constant factor. Thus, we
    have proved the lemma.
\end{proof}

By Lemma \ref{lem:lotterydisc}, we can round up the probabilities in each
lottery to some powers of $(1+\epsilon^2)$ in order to get $1 -
O(\epsilon)$ of the optimal revenue. There is only one catch in this
argument: by rounding up the probabilities, some of them may exceed $1$
and therefore become infeasible. We resolve this problem by rounding
down the probabilities as well as the prices of the resulting
discretized lotteries by a factor of $1 - \epsilon$. By doing so, we
retain feasibility with the extra cost of a $1 - \epsilon$ factor, but
we still gets $1 - O(\epsilon)$ of the optimal revenue.

Lemma \ref{lem:lotterydisc} reduces the number of lotteries from uncountably
infinite to countably infinite. We observe that we can further reduce
this number to finite by dropping invaluable lotteries and the
negligible entries in the valuable lotteries. Concretely, if the
expected value of a lottery
is at most an $\epsilon$ fraction of the expected welfare, then we can
ignore this lottery because the total revenue from such lotteries is at
most an $O(\epsilon)$ fraction of the optimal. Next, suppose we have a
lottery whose expected value is at most an $\epsilon$ fraction of the
expected social welfare. Then, any entry smaller than $\epsilon^2$
contributes at most an $O(\epsilon)$ fraction to the expected value of
this lottery, and hence can be dropped.

\section{Impossibility of Truthful Reserve-Revenue Mechanism}
\label{app:reserverevenue}

In this section, we will show that the reserve-revenue mechanisms can not
be implemented in a truthful and individually rational manner. First of
all, let us formally define the family of reserve-revenue mechanisms.

\begin{definition}
    A mechanism $M$ is a {\em reserve-revenue mechanism} if there is a
    reserve revenue $r^* > 0$ and a threshold social welfare $s^* > 0$
    (presumably $s^* \ge r^*$), such that $M$ achieves revenue at least
    $r^*$ whenever the social welfare (according to the bids) is at
    least $s^*$.
\end{definition}

We have the following negative result.

\begin{theorem} \label{thm:noreserverevenue}
    If there are at least $k \ge 2$ bidders and $n \ge 2$ items, then
    there are no truthful and individually rational reserve-revenue
    mechanism.
\end{theorem}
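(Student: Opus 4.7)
The plan is to argue by contradiction: assume $M$ is a truthful + IR reserve-revenue mechanism with parameters $r^*$ and $s^*$, and derive inconsistent payment constraints by combining the per-bidder IR caps with an IC ``squeeze'' across a carefully chosen pair of bid profiles. First I reduce to the minimal case $k = n = 2$, since extra bidders and extra items can be silenced by bidding $0$ without affecting either welfare or the truthful+IR property of the restricted mechanism.

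For a parameter $H$ with $H \ge s^*/2$, consider the ``disjoint preferences'' profile $P$ in which bidder $1$ bids $(H, 0)$ and bidder $2$ bids $(0, H)$. Social welfare equals $2H \ge s^*$, so revenue must be at least $r^*$. By IR, $p_i \le H$ and is $\le 0$ unless bidder $i$ receives her preferred item (the other item has zero value to her), so when $H$ is chosen with $H < r^*$ the revenue constraint forces both bidders to receive their preferred items in $P$, with $p_1^P + p_2^P \ge r^*$.

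The squeeze step uses a threshold-matching deviation. Consider $P'$ in which bidder $1$ underbids to $(H - \delta, 0)$ with $\delta = 2H - s^*$, so welfare in $P'$ is exactly $s^*$ and the revenue constraint remains active. The same reasoning forces both bidders to receive their preferred items in $P'$, and IR gives $p_1^{P'} \le H - \delta = s^* - H$. Applying IC for bidder $1$ with true value $(H, 0)$---comparing the truthful outcome in $P$ against the deviation outcome in $P'$, both of which allocate item $1$ to her---yields $p_1^P \le p_1^{P'} \le s^* - H$. A symmetric deviation in which bidder $2$ underbids gives $p_2^P \le s^* - H$. Summing, $r^* \le p_1^P + p_2^P \le 2(s^* - H)$, so $H \le s^* - r^*/2$. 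Pushing $H$ as large as permitted by $H < r^*$ drives this inequality into an outright contradiction whenever $r^*$ is close enough to $s^*$ (the regime motivating the definition, where reserve revenue targets a $(1-\eps)$ fraction of expected welfare).

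The main obstacle is the regime in which $r^*$ is a small fraction of $s^*$: there the revenue constraint no longer forces both bidders to win their preferred items, the IR caps loosen, and the single squeeze above loses its bite. I would close this gap by iterating the IC squeeze across a family of deviation profiles that exploit $n \ge 2$ and $k \ge 2$---distributing value across several items so that each bidder's interest overlaps partially with another's, thereby obtaining sharper per-item IR caps---or by combining underbidding with ``swap'' deviations in which a bidder reports positive value on the other bidder's preferred item, forcing the menu (via the taxation principle) to expose options whose prices must simultaneously satisfy IR under both the truthful and the swapped reports. In either refinement the core phenomenon is unchanged: the reserve-revenue requirement pins the mechanism close to a welfare-maximizing allocation with tight prices, while IC across underbidding deviations slashes those prices strictly below what the revenue guarantee demands.
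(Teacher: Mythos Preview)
Your overall strategy---reduce to $k=n=2$, take a symmetric disjoint-preferences profile, lower one bidder until welfare hits exactly $s^*$, and use the IR cap at the low profile together with IC to bound prices at the high profile---is exactly the paper's. The gap you yourself flag is real, and your proposed remedies miss the simple fix.

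Your argument requires $H<r^*$ (so that IR forces both bidders to be served) and then bounds each price by $s^*-H$, giving $r^*\le 2(s^*-H)$; this contradicts the admissible range $s^*/2\le H<r^*$ only when $r^*>2s^*/3$. The paper eliminates this restriction by a different parameter choice, not by iterating squeezes or introducing swap deviations. It takes the high profile at $v_{11}=v_{22}=s^*-r^*/3$ and in the deviation drops bidder~$1$ all the way to $v_{11}'=r^*/3$ (so welfare is again exactly $s^*$). The crucial point is that the low value is a fixed fraction of $r^*$, not $s^*-H$: IR at the deviation profile caps bidder~$1$'s price at $r^*/3$, and since only her value for item~$1$ has increased between the two profiles, the taxation principle transports this cap back to the high profile. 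By symmetry both prices at the high profile are at most $r^*/3$, so revenue $\le 2r^*/3<r^*$, a contradiction for every $r^*,s^*>0$. In short, the single missing idea in your argument is to decouple the deviation's low value from $s^*$ and tie it directly to $r^*$; once you do that, no iteration or multi-item swap trick is needed.
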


\begin{proof}
    It suffices to prove the theorem for the case of $k = n = 2$. Assume
    for contradiction that $M$ is a truthful and individually rational
    reserve-revenue mechanism with reserve revenue $r^*$ and threshold
    social welfare $s^*$.

    Let us consider what happens when $v_{11} = s^* - \frac{r^*}{3}$, 
    $v_{12} = 0$, $v_{21} = 0$, $v_{22} = s^* - \frac{r^*}{3}$. We claim
    that in this case $p_1, p_2 \le \frac{r^*}{3}$.

    Consider the alternative type profile in which bidder $1$'s values
    are $v'_{11} = \frac{r^*}{3}$ and $v'_{12} = 0$, and bidder $2$'s
    values are still the same. Note that the social
    welfare for this type profile is exactly $s^*$. So $M$ shall
    achieve revenue at least $r^*$. It is obvious that the only
    allocation that could achieve this level of revenue in an
    individually rational fashion is to give bidder $1$ item $1$ and to
    give bidder $2$ item $2$. The price for bidder $1$ in this case is at
    most $\frac{r^*}{3}$. By the taxation principle, from bidder $1$'s
    viewpoint any truthful mechanism should look like a menu of lotteries
    over possible outcomes with prices that do not depend on the value
    of bidder $1$. Moreover, bidder $1$ should always get one of the
    utility maximizing lottery. Therefore, we know that the lottery that
    corresponds to getting item $1$ and not getting item $2$ with
    probability $1$ is available to bidder $1$ with price at most
    $\frac{r^*}{3}$ when bidder $2$ bids $\bv_2$ and it is bidder $1$'s
    utility-maximizing lottery when her valuation is $\bv_1'$. Note that
    the only difference between $\bv_1$ and $\bv_1'$ is the value for
    item $1$ increases. So we conclude that when the type profile is
    $\bv_1$, $\bv_2$, bidder $1$ should purchase the same lottery with
    the same price. Hence, we have proved that $p_1 \le \frac{r^*}{3}$.

    Similarly, we can show that $p_2 \le \frac{r^*}{3}$. Now we get that
    the revenue achieved by $M$ when the values are $\bv_1$ and $\bv_2$
    is at most $p_1 + p_2 \le \frac{2 r^*}{3} < r^*$. Thus, we have
    obtained a contradiction.
\end{proof}

\begin{remark}
    The conditions in Theorem \ref{thm:noreserverevenue} cannot be relaxed for
    that if $k = 1$, then the grand-bundle-reserve-price auction is a
    reserve-revenue mechanism, and if $n = 1$, then the standard reserve
    price auction is a reserve-revenue mechanism.
\end{remark}

\end{document}